\documentclass[a4paper,USenglish, autoref, cleveref, thm-restate, numberwithinsect, notab]{lipics-v2021}

\usepackage{tcolorbox}
\usepackage{xspace}
\usepackage{xcolor}
\usepackage[ruled]{algorithm}
\usepackage{algpseudocode}

\title{Decomposing a Simple Polygon with Geodesic Unit-Balls} %TODO Please add

\author{Reilly Browne}{Department of Computer Science, Dartmouth College, Hanover, NH, USA}{reilly.browne.gr@dartmouth.edu}{https://orcid.org/0000-0003-3725-5245}{}

\author{Prahlad Narasimhan Kasthurirangan}{Department of Applied Mathematics and Statistics, Stony Brook University, USA}{prahladnarasim.kasthurirangan@stonybrook.edu}{https://orcid.org/0000-0002-8518-7745}{Supported by the US Office of Naval Research under grant no. N00014-26-1-2088.}

\authorrunning{R. Browne and P.N. Kasthurirangan} %TODO mandatory. First: Use abbreviated first/middle names. Second (only in severe cases): Use first author plus 'et al.'

\Copyright{Reilly Browne and Prahlad Narasimhan Kasthurirangan} %TODO mandatory, please use full first names. LIPIcs license is "CC-BY";  http://creativecommons.org/licenses/by/3.0/

\ccsdesc[500]{Theory of computation~Computational geometry}
\ccsdesc[500]{Theory of computation~Packing and covering problems}
\ccsdesc[500]{Theory of computation~Approximation algorithms analysis}

\keywords{Covering, partitioning, polygon, $k$-center, constant factor approximation.} %TODO mandatory; please add comma-separated list of keywords

\category{} %optional, e.g. invited paper

\relatedversion{} %optional, e.g. full version hosted on arXiv, HAL, or other respository/website

\acknowledgements{We would like to thank the anonymous reviewers of ESA 2026 for their valuable feedback.}%optional

\EventEditors{Philip Bille, Seth Pettie, and Sabine Storandt}
\EventNoEds{3}
\EventLongTitle{34th Annual European Symposium on Algorithms (ESA 2026)}
\EventShortTitle{ESA 2026}
\EventAcronym{ESA}
\EventYear{2026}
\EventDate{August 31--September 4, 2026}
\EventLocation{L'Aquila, Italy}
\EventLogo{}
\SeriesVolume{388}
\ArticleNo{65}

\newcommand{\Q}{\mathcal{Q}}
\renewcommand{\S}{\mathcal{S}}
\newcommand{\I}{\mathcal{I}}

\newcommand{\reals}{\mathbb{R}}
\newcommand{\naturals}{\mathbb{N}}

\newcommand{\order}[2][]{\mathcal{O}_{#1}(#2)}
\newcommand{\ordertilde}[2][]{\widetilde{\mathcal{O}}_{#1}(#2)}
\newcommand{\nph}{\textup{\textsc{NP-Hard}}\xspace}

\newcommand{\false}{\textup{\textsc{false}}\xspace}
\newcommand{\true}{\textup{\textsc{true}}\xspace}

\newcommand{\opt}{\textup{\textsc{opt}}\xspace}

\newcommand{\dg}[3][P]{d_{#1}(#2, #3)}
\newcommand{\de}[2]{d(#1, #2)}
\newcommand{\spath}[3][P]{\pi_{#1}(#2, #3)}
\newcommand{\balle}[2][1]{B_{#1}(#2)}
\newcommand{\ballg}[2][P]{B_{1}^{#1}(#2)}
\newcommand{\ballgr}[3][P]{B^{#1}_{#2}(#3)}
\newcommand{\sector}[3][c]{S(#1,#2,#3)}
\newcommand{\sextant}[2][c]{S'(#1,#2)}

\newcommand{\GRCover}{\textup{\textsc{Small GR-Cover}}\xspace}
\newcommand{\GRPart}{\textup{\textsc{Small GR-Partition}}\xspace}

\newcommand{\wordRGeodesicCover}{small geodesic radius cover\xspace}
\newcommand{\wordGRPartition}{small GR-partition\xspace}
\newcommand{\wordGRCover}{small GR-cover\xspace}
\newcommand{\GRpiece}{small GR-piece\xspace}
\newcommand{\GeodesicRpiece}{small geodesic radius piece\xspace}
\newcommand{\GRgreedyboundaryalgo}{\textup{\textsc{SmallGRGreedyBoundaryCover}}\xspace}
\newcommand{\greedyinterioralgo}{\hyperref[alg:greedy_interior]{\textup{\textsc{GreedyInteriorCover}}}\xspace}
\newcommand{\GRextracentersalgo}{\hyperref[alg:extra_centers]{\textup{\textsc{SmallGRAddRedundantCenters}}}\xspace}
\newcommand{\GRgreedypolygonalgo}{\hyperref[alg:greedy_full]{\textup{\textsc{SmallGRGreedyCover}}}\xspace}

\newcommand{\GDCover}{\textup{\textsc{Small GD-Cover}}\xspace}
\newcommand{\GDPart}{\textup{\textsc{Small GD-Partition}}\xspace}

\newcommand{\wordGDPartition}{small GD-partition\xspace}

\newcommand{\wordGDCover}{small GD-cover\xspace}
\newcommand{\GDpiece}{small GD-piece\xspace}
\newcommand{\GeodesicDpiece}{small geodesic diameter piece\xspace}
\newcommand{\GDgreedyboundaryalgo}{\textup{\textsc{SmallGDGreedyBoundaryCover}}\xspace}
\newcommand{\GDextracentersalgo}{\hyperref[alg:GD_extra_centers]{\textup{\textsc{SmallGDAddRedundantCenters}}}\xspace}
\newcommand{\GDgreedypolygonalgo}{\hyperref[alg:GD_greedy_full]{\textup{\textsc{SmallGDGreedyPartition}}}\xspace}

\theoremstyle{remark}
\newtheorem{innercase}{Case}[theorem]

\DeclareRobustCommand{\lipicsEnd}{%
	\leavevmode\unskip\penalty9999 \hbox{}\nobreak\hfill
	\quad\hbox{$\lrcorner$}%
}

\newenvironment{case}
{%
  \pushQED{\lipicsEnd}%
  \begin{innercase}%
}
{%
  \popQED%
  \end{innercase}%
}

\nolinenumbers

\begin{document}

\maketitle

\begin{abstract}
    We consider covering and partitioning a simple polygon into pieces which either have unit geodesic radius or unit geodesic diameter, using the $\ell_2$-metric for distances. There is no known method for finding an exact solution to these problems, even when the input size is constant, and the problem is known to be NP-hard in the case of polygons with holes. With this in mind, we instead devote our attention to developing simple approximation algorithms that run in polynomial time. For the radius problem, we present the first known approximation algorithms for both covering and partitioning, achieving a factor of 9. For the diameter problem, we are only able to give a positive result for the partition version of the problem, where we improve upon a complicated 72-approximation from Abrahamsen and Rasmussen \cite{abrahamsen_partitioning_2022}, achieving a simple 15-approximation.
\end{abstract}

\section{Introduction}
%Story: like Vigan \cite{vigan_packing_2013}. Equitable partitioning studied. Two ideas here. (1) Want to ensure that nobody  has to walk more than 1 unit within polygon to get to their nearest center in a polygon (radius; covering with disks). (2) No two points in the same cell are more than 1 unit apart (diameter). Covering points has been studied extensively, but many practical problems require covering continuum. Similar to $k$-center in the continuum, studied extensively (although not so much in polygons). Covering boundary has been looked at by Vigan. (2) has been looked by Mikkel. There has also been recent work on decomposing polygons into other types of small pieces.  

%\begin{itemize}
%    \item General partitioning: \cite{keil_decomposing_1985,mark_keil_chapter_2000}
%    \item Equitable/balanced partitioning: \cite{carlsson_dividing_2013,behroozi_computational_2020,kostitsyna_balanced_nodate}
%    \item $k$-center, recently in polygons \cite{suzuki_p-center_1996,du_approximation_2014,oh_geodesic_2018,evans_polygon_2022,de_berg_clustering_2023}
%    \item Covering points: \cite{fowler_optimal_1981,gonzalez_covering_1991,hochbaum_approximation_1985,biniaz_approximation_2017} \cite{browne_single_2026} \prahlad{Add your SoCG paper, Reilly?}
%\end{itemize}

A fundamental problem which appears in many domains is that of describing large regions as a series of smaller, easier to process pieces. Depending on the application, there are varying definitions of small which are relevant. In some settings, we want pieces which have some simple property to work with, such as convex pieces or star-shaped pieces \cite{keil_decomposing_1985,mark_keil_chapter_2000}, while in others, we care more about physical size, such as whether a piece fits inside of a square \cite{abrahamsen_partitioning_2022,abrahamsen_hardness_2024,aamand_covering_2026} or that this division is, in some sense, equitable  \cite{carlsson_dividing_2013,kostitsyna_balanced_2013,behroozi_computational_2020}. There is also the concern of whether the pieces are allowed to overlap, in which case we are finding a \emph{cover} of the region, or whether the pieces must be (interior) disjoint, in which case we are looking for a \emph{partition} and whether we are covering a continuous domain or a discrete set of points \cite{fowler_optimal_1981,gonzalez_covering_1991,hochbaum_approximation_1985,biniaz_approximation_2017}.

Our focus will be on two notions of \emph{geodesic} smallness, specifically pieces which have unit geodesic \emph{radius} or unit geodesic \emph{diameter} (see Section~\ref{sec:prelims} for formal definitions), and our region of interest will be a simple polygon $P$. 

In the radius problem, we are finding a set of center points such that every point in $P$ is within unit geodesic distance of at least one center. This notion of ``small'' is natural for facility location applications and is equivalent to the dual of the famous $k$-\textsc{center} problem \cite{suzuki_p-center_1996,du_approximation_2014,oh_geodesic_2018,evans_polygon_2022,de_berg_clustering_2023}: In $k$-\textsc{center}, we have a fixed budget of $k$ centers and want to instead minimize the maximum distance from a point to its nearest center. The problem of covering a \textit{discrete} set of points in $P$ with geodesic unit-balls admits a local search--based PTAS~\cite{mustafa-irghs-2010,harpeled-aaplg-2017a}, owing to the existence of planar supports~\cite{browne_single_2026}. In contrast, for covering (a continuum in) $P$, progress has been limited. Rabanca and Vigan gave a $2$-approximation algorithm for covering \textit{the boundary} of $P$ in 2015~\cite{rabanca_covering_2015}, and we are unaware of any subsequent improvements. Indeed, designing an algorithm to cover all of $P$ was explicitly posed as an open problem by Abrahamsen and Rasmussen in 2022~\cite{abrahamsen_partitioning_2022}.

The state of the art for the diameter variant, in which no two points within the same piece are at geodesic distance greater than one, is somewhat stronger. While no exact algorithms are known (even with exponential running time), Abrahamsen and Rasmussen achieve a $72$-approximation for the partition version~\cite{abrahamsen_partitioning_2022}. 

\subsection{Preliminaries}\label{sec:prelims}

Consider two points $p$ and $q$ on the plane. The \textit{Euclidean distance} between them is denoted by $\de{p}{q}$. For $r \in \reals$, let $\balle[r]{p}$ denote the \textit{Euclidean ball} of radius $r$ centered at $p$, i.e., $\balle[r]{p} := \{q \in \reals^2 \mid \de{p}{q} \leq r\}$.

Throughout this paper, we use the term \textit{polygonal domain} to denote polygons (possibly) with holes and \textit{polygons} for simple polygons. For two points $p$ and $q$ in a polygonal domain $P$, let $\spath{p}{q}$ denote a shortest path\footnote{Shortest paths between points within a polygon are unique; this might not hold in a polygonal domain.} between them within $P$. The \textit{geodesic distance} $\dg{p}{q}$ is defined as $|\spath{p}{q}|$, the length of a shortest path between $p$ and $q$ within $P$ . The \textit{geodesic ball} of radius $r$ centered at $p$ is defined as $\ballgr{r}{p} := \{q \in P \mid \dg{p}{q} \leq r\}$. The \textit{geodesic radius} of a polygonal domain $P$ is the smallest $r$ such that $\ballgr{r}{c} = P$ for some $c \in P$.
The \textit{geodesic diameter} of $P$ is the maximum geodesic distance between any two points in $P$.

A polygon $Q$ is a \textit{\GeodesicRpiece} (hereafter, \textit{\GRpiece}) if the geodesic radius of $Q$ is at most $1$. A collection of polygons $\Q$ is a \textit{\wordRGeodesicCover} (hereafter, \textit{\wordGRCover}) of $P$ if (i) each $Q \in \Q$ is a \GRpiece, and (ii) $\bigcup_{Q \in \Q} Q = P$. In particular, note that each $Q \in \Q$ is a subset of $P$. A \wordGRCover $\Q$ is a \textit{\wordGRPartition} if, for any distinct $Q, Q' \in \Q$, $Q$ and $Q'$ are interior disjoint. A \textit{\GeodesicDpiece} (hereafter, \textit{\GDpiece}), \textit{\wordGDCover}, and \textit{\wordGDPartition} are defined similarly.

We are now ready to define our problems of interest. Given a polygonal domain $P$ and an integer $k \in \naturals$, \GRCover (resp.\ \GRPart) asks whether there exists a \wordGRCover (resp.\ \wordGRPartition) $\Q$ of $P$ with $|\Q| \leq k$. \GDCover and \GDPart are similarly defined in terms of \GDpiece{}s.

It is known that \GRCover is \nph in polygonal domains~\cite{vigan_packing_2013}. Their construction can be readily adapted to show that the other three problems are also \nph.

\begin{theorem}[Theorem 10 in \cite{vigan_packing_2013}]
    \GRCover, \GRPart, \GDCover, and \GDPart are \nph in polygonal domains.
\end{theorem}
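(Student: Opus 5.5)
We prove hardness by a polynomial-time reduction from a planar covering problem, in the style of Vigan's construction. For \GRCover we take the source problem to be \textsc{Planar Vertex Cover} restricted to planar graphs of maximum degree three. We could equally start from planar 3-SAT or from \textsc{Planar Monotone Rectilinear 3-SAT}; what matters is only that the source problem is \nph and planar.

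First, fix a planar orthogonal embedding of the graph $G$ with polynomial coordinates. We then build a polygonal domain $P$ consisting of thin corridors of width $\varepsilon \ll 1$ along the embedded edges. Each edge is replaced by a corridor whose length is a carefully chosen odd multiple of a unit-scale quantity. The faces of the embedding become holes, and this is the only place the construction uses holes.

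The corridors are thin, so geodesic unit-balls inside them are essentially intervals of length $2$ along the corridor. Covering a corridor of length $2m$ therefore needs at least $m$ balls. A corridor of length $2m+\delta$ needs $m+1$ balls, unless one of its endpoint junctions is ``helped'' by a ball centered at the vertex gadget; in that case $m$ balls suffice. We choose lengths so that this slack is exactly one ball per edge. Each vertex gadget is a junction where the three incident corridors meet at angles that prevent a single ball from reaching far down more than the desired amount. We then set $k = \sum_e m_e + K$, where $K$ is the target vertex cover size.

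For correctness, the forward direction places one center at the junction of every vertex in the cover. It then places the $m_e$ interval-balls along each edge, shifted toward a covered endpoint. The backward direction is the main obstacle. Starting from any optimal cover, we must normalize it by pushing centers to canonical positions. For this we would prove a local exchange lemma: in a thin corridor, any center can be slid along the corridor's medial axis without shrinking the part of the corridor it covers. After normalization, each edge corridor has a charge of $m_e$ balls, and every junction that receives an extra ball indexes a vertex. Any edge with neither endpoint charged would force one extra ball. A counting argument then gives a vertex cover of size at most $K$.

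To obtain \GRPart, we observe that in the normalized solution the balls restricted to the corridors can be trimmed to interior-disjoint pieces. Cut each corridor perpendicularly between consecutive centers; every resulting piece still contains its center and has geodesic radius at most $1$. Hence a partition and a cover of the same size coexist, and the reduction shows both problems are hard.

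For the diameter versions, we scale all lengths by $1/2$. In a thin corridor, a piece of geodesic diameter at most $1$ is, up to $O(\varepsilon)$, a sub-interval of length at most $1$, which is exactly what a radius-$\frac12$ ball gives. At the junctions, a piece containing the junction point and reaching length $\frac12$ down several arms has diameter about $1$. This matches the radius-$\frac12$ ball centered at the junction, provided the arms meet at angles at least $90^\circ$, so that geodesic and Euclidean distances along different arms add. The same counting argument therefore goes through, giving hardness of \GDCover and \GDPart.

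The delicate point common to all four problems is controlling the $O(\varepsilon)$ error terms near junctions. We handle this by choosing the length slack $\delta$ much larger than $\varepsilon$ but much smaller than $1$.
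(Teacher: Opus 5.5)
Your backward direction fails, and not because of the $O(\varepsilon)$ terms: with the gadgets you describe, the optimum can be strictly smaller than $\sum_e m_e + \mathrm{vc}(G)$. The false step is ``any edge with neither endpoint charged would force one extra ball.'' A junction can be helped by a ball centered inside an adjacent corridor $f$, not only by one centered at the junction. A center on $f$ at distance $a<1$ from the junction reaches about $1-a$ down every other corridor there, because geodesic lengths through a thin junction simply add; the angles between the arms play no role. That ball is paid for by $f$, and one extra ball lets $f$ help \emph{both} of its end junctions. With $L_f=2m_f+\delta$, put centers at distance $1-2\delta$ from each end of $f$ and $m_f-1$ centers in between; the middle gap has length $2m_f-4+5\delta<2(m_f-1)$. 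Each end junction then receives help about $2\delta\gg\varepsilon$, which is all a neighboring corridor of length $2m+\delta$ needs to be covered by $m$ balls. Now take $G=K_4$ (planar, cubic) and do this on the perfect matching $\{12,34\}$. All four junctions are helped, each of the other four corridors uses $m_e$ balls, and the total is $\sum_e m_e+2<\sum_e m_e+3=\sum_e m_e+\mathrm{vc}(K_4)$. So $(K_4,K=2)$ maps to a yes-instance although $K_4$ has no vertex cover of size $2$.

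Neither your sliding normalization nor a different slack repairs this. Moving an end ball of $f$ onto its junction shrinks what it covers inside $f$ and forces another ball. Helping both ends of $f$ with junction balls therefore costs two extra balls instead of one. With slack $x$ (corridor length $2m+x$), a junction ball alone suffices only if $x\le 1$. Meanwhile a corridor carrying $m+1$ balls has a spill budget of $2-x$, so it can give help $1-\frac{x}{2}\ge\frac{x}{2}$ at each of its two ends. That is enough for any corridor both of whose ends are helped this way, so in the thin-corridor limit the same $K_4$ example defeats every admissible slack.

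What is missing is a vertex gadget in which help at a junction can only be bought by a ball that cannot also serve a second junction, or a different source problem whose gadgets tolerate such shared slack. You would also need a rigorous lower-bound argument. Your diameter version is the same reduction at scale $\frac12$ and inherits the failure.

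For comparison, the paper does not construct a reduction: it cites Vigan's construction for \GRCover and asserts that it adapts to the other three problems. For \GRPart you do not need your corridor-trimming argument. Theorem~\ref{thm:coverispartition} shows that optimal cover and partition sizes coincide on every polygonal domain, so hardness transfers verbatim with the same $k$.
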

While it is not known whether these problems remain \nph in simple polygons, there has been extensive work dating back nearly a century within the mathematics community on determining the minimum number of unit balls required to cover constant-complexity regions (such as squares, triangles or disks of larger radius). Despite this long history, even some very basic questions in this area remain unresolved; see, for example, \cite{kershner_number_1939,verblunsky_least_1949,weisstein_disk_nodate,friedman_circles_nodate,friedman_circles_nodate-1,abrahamsen_partitioning_2022} and references therein. Recently, Abrahamsen and Stade showed that covering and partitioning polygons with pieces that fit inside a unit square is \nph \cite{abrahamsen_hardness_2024}.   

\subsection{Overview of results}
As our first result, we establish in Section~\ref{sec:equivalence} that \GRCover and \GRPart are equivalent in the sense that they admit optimal solutions of the same size. 

\begin{theorem}\label{thm:coverispartition}
    Let $\Q$ and $\Q'$ denote an optimal \wordGRCover and \wordGRPartition of a polygonal domain $P$ respectively. Then, $|\Q| = |\Q'|$.
\end{theorem}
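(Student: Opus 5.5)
A partition is a cover, so $|\Q| \le |\Q'|$. The work is in the converse: from an optimal \wordGRCover $\Q = \{Q_1,\dots,Q_k\}$ we build a \wordGRPartition with at most $k$ pieces. Each $Q_i$ has geodesic radius at most $1$, so we fix centers $c_i \in Q_i$ with $Q_i \subseteq \ballgr[Q_i]{1}{c_i}$. Since $Q_i \subseteq P$, paths inside $Q_i$ are paths inside $P$, so $\dg{c_i}{q} \le d_{Q_i}(c_i,q) \le 1$ for every $q \in Q_i$. Hence the balls $\ballg{c_i}$ cover $P$. The natural partition to take is the geodesic Voronoi diagram of the centers $c_1,\dots,c_k$ inside $P$: let $V_i$ be the set of points of $P$ whose geodesic distance to $c_i$ is at most their distance to every other center, with ties broken by index so the cells are interior disjoint.

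The key step is that each cell $V_i$ has geodesic radius at most $1$ measured \emph{within $V_i$ itself}, not just within $P$. Take $q \in V_i$. Because the balls cover $P$, $\dg{c_i}{q} = \min_j \dg{c_j}{q} \le 1$. I then claim the shortest path $\spath{c_i}{q}$ in $P$ lies entirely in $V_i$. Suppose a point $x$ on this path satisfies $\dg{c_j}{x} < \dg{c_i}{x}$. Then
\[
\dg{c_j}{q} \le \dg{c_j}{x} + \dg{x}{q} < \dg{c_i}{x} + \dg{x}{q} = \dg{c_i}{q},
\]
which contradicts $q \in V_i$. Equality cases are handled consistently by the tie-breaking rule, for example by perturbing or by using lexicographic order on (distance, index). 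So $d_{V_i}(c_i,q) = \dg{c_i}{q} \le 1$, and $V_i$ is star-shaped with respect to geodesics from $c_i$.

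The main obstacle is showing that the $V_i$ are genuinely \emph{polygons}, as the definition of a \GRpiece requires. Geodesic bisectors in a polygonal domain are made of hyperbolic arcs, so the cells are not polygonal in general. I would fix this by exploiting the slack in the radius bound. I would replace each curved bisector arc by a polygonal chain that stays inside the region where the neighbouring cells' balls still reach, while preserving the geodesic star-shapedness from each $c_i$. A simpler alternative is to define the partition by a polygonal refinement: triangulate $P$ finely and assign each small piece to a center, and argue compactness plus the strict covering slack. If no strict slack exists, this needs care near points at distance exactly $1$. I would also remove empty cells and degenerate pieces such as segments.

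Finally, I need to check that each $V_i$ is connected (it is, because of the path-containment argument) and that the cells are interior disjoint. Then $\{V_i\}$ is a \wordGRPartition with at most $k$ pieces, so $|\Q'| \le |\Q|$, which completes the proof.
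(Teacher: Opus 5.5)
Your proof is correct and follows the paper's route: the paper's Lemma~\ref{lma:covertopartition} likewise takes the centers of an optimal cover, forms their geodesic Voronoi diagram in $P$, and uses that every point's nearest center lies within geodesic distance $1$. Your path-containment step (showing $\spath{c_i}{q}\subseteq V_i$, so the radius bound holds \emph{within} the cell) supplies a detail the paper leaves implicit. Your polygonality worry is not one the paper addresses: it treats curved regions such as geodesic balls and Voronoi cells as admissible pieces throughout, so you can use the $V_i$ as they are, which is fortunate because your sketched polygonal repair would not yet be a proof when the optimal cover has no slack.
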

Theorem~\ref{thm:coverispartition} (along with the fact that we can \textit{compute} a \wordGRPartition from a \wordGRCover efficiently) allows us to focus on coverings without loss of generality. Building on this, we design (in Section~\ref{sec:radius}) the first constant-factor approximation algorithms for \GRCover and \GRPart, thereby resolving an open question of Abrahamsen and Rasmussen~\cite{abrahamsen_partitioning_2022}. We use $\ordertilde{\cdot}$ to hide logarithmic factors, $n$ to denote the number of vertices of $P$, and \opt for the optimal number of small pieces in the problem of interest.

\begin{restatable}{theorem}{grapprox}\label{thm:GR_approx}
    Consider a simple polygon $P$ with $n$ vertices. There exists an algorithm running in $\ordertilde{\opt^3 \cdot n^2}$ time that returns a \wordGRCover of $P$ with at most $9 \cdot \opt$ pieces. Moreover, there is an algorithm which returns a \wordGRPartition of $P$ with the same running time and approximation guarantee.
\end{restatable}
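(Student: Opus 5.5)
The plan is a greedy ``pick an uncovered point and place a unit geodesic ball on it'' algorithm, analysed by a packing/charging argument against an optimal cover, followed by a Voronoi-style conversion to a partition. Fix an optimal \wordGRCover with centers $c_1,\dots,c_{\opt}$. Every greedy center $p$ lies in some $\ballg{c_i}$. By construction, any two greedy centers are at geodesic distance greater than $1$, since each new center was uncovered when it was chosen. It therefore suffices to show that a single geodesic unit ball $\ballg{c}$ in a simple polygon contains at most $9$ points that are pairwise at geodesic distance greater than $1$. I would run the greedy in two phases, \GRgreedyboundaryalgo and then \greedyinterioralgo, combined in \GRgreedypolygonalgo. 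The first phase covers $\partial P$ by walking along the boundary, in the style of Rabanca and Vigan. The second phase picks uncovered interior points. I would charge the two phases separately.

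\textbf{Charging the interior phase.} For the interior phase I would partition $\ballg{c}$ into the six sectors $\sextant{\cdot}$ of angle $60^\circ$ around $c$. Consider two points $p,q$ in the same sextant, both visible from $c$. The segment $pq$ lies in $P$, because the triangle $cpq$ is bounded by the two visible segments and, since $P$ is simple, contains no hole. So $\dg{p}{q}=\de{p}{q}\le 1$, as the triangle has apex angle at most $60^\circ$ and both sides at most $1$. Hence each sextant contributes at most one visible greedy center. Greedy centers not visible from $c$ sit behind a reflex vertex $v$ on $\spath{c}{p}$, so they lie in a smaller geodesic ball of radius $1-\dg{c}{v}$ around $v$. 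For these I would invoke the boundary-cover phase: the boundary portion near $v$ is already covered, and I would argue that this forces any interior greedy center in such a ``pocket'' to be charged to a boundary ball rather than to $c$. If this works, the interior phase contributes at most $6$ centers per optimal ball, plus the center region itself.

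\textbf{Charging the boundary phase.} I would show that greedy boundary centers chosen at arc-length spacing are pairwise more than $1$ apart geodesically. By a Rabanca--Vigan-style argument, $\ballg{c}$ meets the boundary in a way that admits at most $3$ such centers, which gives $6+3=9$. I would use \GRextracentersalgo as a cleanup step that adds reflex-vertex centers, so that the visibility-based sextant argument actually applies. This step, controlling the non-visible points so that the sextant count is not exceeded, is where I expect the real difficulty. The first thing I would try is to show that every greedy point hidden from $c$ is within distance $1$ of a boundary center chosen in phase one, and hence was already covered when the interior phase reached it.

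\textbf{Running time and the partition version.} Each greedy step maintains the uncovered region as the complement in $P$ of a union of $\order{\opt}$ geodesic balls. A new uncovered point, namely a vertex of this arrangement, can be found in $\ordertilde{\opt^2 n^2}$ time, and there are $\order{\opt}$ steps, giving $\ordertilde{\opt^3 n^2}$ in total. For the partition, I take the geodesic Voronoi diagram of the chosen centers restricted to $P$. Each point is within distance $1$ of its own site. The shortest path from a point to its site stays in that site's cell, since points along the path keep the same nearest site, so each cell has geodesic radius at most $1$ within the cell. The cells are interior disjoint, so this yields a \wordGRPartition with the same number of pieces, in the spirit of Theorem~\ref{thm:coverispartition}.
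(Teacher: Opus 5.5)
\paragraph{The charging argument has a genuine gap.} It starts with the reduction you open with. A geodesic unit ball can contain arbitrarily many points that are pairwise more than $1$ apart geodesically. Take a star with $m$ thin spikes of length just under $1$ around $c$: the tips all lie in $\ballg{c}$ and are pairwise almost $2$ apart. So geodesic separation alone can never yield a constant.

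The sextant step does not fix this, even for points visible from $c$. The sides $\overline{cp},\overline{cq}$ lie in $P$, but $\overline{pq}$ need not. The exterior of a simple polygon can enter the triangle $cpq$ through $\overline{pq}$, for instance as a thin notch crossing $\overline{pq}$ and reaching almost to $c$. Then $\dg{p}{q}$ is close to $\de{p}{c}+\de{c}{q}$, which can approach $2$. The hidden-point case is left as ``if this works''.

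The boundary side has problems too. The claim that the Rabanca--Vigan centers are pairwise geodesically more than $1$ apart is false: in a long corridor of width $\epsilon$, the centers serving the two long sides can nearly coincide. The bound of $3$ per optimal ball is unsupported. Placing extra centers at reflex vertices can put $\Theta(n)$ centers inside a single optimal piece, which destroys the ratio.

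\paragraph{The missing idea is Lemma~\ref{lma:covers_half}.} The extra centers are the endpoints of the greedy boundary intervals, not reflex vertices. The resulting $2|\I|$ balls cover every point at \emph{Euclidean} distance at most $\frac12$ from $\partial P$. To see this, take $p$ and its nearest boundary point $q\in I$. Either $\overline{pq}$ misses $\spath{s_I}{c_I}\cup\spath{c_I}{t_I}$, so $p$ lies in the region bounded by these paths and $I$, which is inside $\ballg{c_I}$. Or $\overline{pq}$ meets, say, $\spath{c_I}{t_I}$ at a point $q'$ that is geodesically within $\frac12$ of $c_I$ or of $t_I$, while $\de{q'}{p}\le\frac12$.

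Consequently, if two points not covered by these balls satisfy $\de{p}{q}\le 1$, then $\overline{pq}$ cannot meet $\partial P$, so $\dg{p}{q}=\de{p}{q}$. This is exactly the fact your triangle argument lacked. The interior greedy centers are therefore pairwise more than $1$ apart in the Euclidean metric. Since $\ballg{c}\subseteq\balle{c}$, each optimal piece contains at most $5$ of them by Corollary~\ref{coro:ball_contains_5}, with no visibility case analysis.

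The boundary centers are counted globally, not per ball. Lemma~\ref{lma:boundary_partition_greedy} gives $|\I|\le 2\opt'-1$, so $2|\I|\le 4\opt$, and the total is $4\opt+5\opt$.

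\paragraph{Two smaller points.} A vertex of the arrangement is the wrong choice of next center. It lies on the boundary of a closed ball, so it is already covered and the separation becomes non-strict; you need a point in the interior of the uncovered region. Your Voronoi conversion to a partition is fine.
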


Along the way, we also obtain a significantly simpler proof that the greedy boundary cover of Rabanca and Vigan~\cite[Theorem 1]{rabanca_covering_2015} yields a $2$-approximation. Our ideas are general enough to obtain a substantially improved approximation for \GDPart, achieving a $15$-approximation within the same running time. This not only improves the previously best-known factor of $72$~\cite[Table 1]{abrahamsen_partitioning_2022}, but also leads to a conceptually simpler algorithm and analysis. We discuss this in Section~\ref{sec:diameter}.

\begin{restatable}{theorem}{gdapprox}\label{thm:GD_approx}
    Consider a simple polygon $P$ with $n$ vertices. There is an algorithm running in $\ordertilde{\opt^3 \cdot n^2}$ time that returns a \wordGDPartition of $P$ with at most $15 \cdot \opt$ pieces.
\end{restatable}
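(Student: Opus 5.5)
The statement to prove is Theorem~\ref{thm:GD_approx}. I plan to reduce it to the radius machinery and pay a constant factor, combined with an added-centers argument specific to partitions.

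\textbf{Step 1: relate diameter to radius.} Any \GDpiece $Q$ with geodesic diameter at most $1$ lies in the geodesic ball of radius $1$ around any of its points, and also has geodesic radius at most $1/2$ up to the standard geodesic-center bound. In simple polygons, the geodesic center satisfies radius $=$ diameter$/2$. Hence an optimal \wordGDPartition of size \opt yields a cover of $P$ by \opt geodesic balls of radius $1/2$, each intersected with $P$.

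\textbf{Step 2: build the cover greedily.} I rerun the greedy scheme of Section~\ref{sec:radius}: first \GRgreedyboundaryalgo-style boundary covering, then \greedyinterioralgo, then \GRextracentersalgo, this time with balls of radius $1/2$. The charging argument behind Theorem~\ref{thm:GR_approx} then gives at most $9\cdot\opt$ centers whose radius-$1/2$ geodesic balls cover $P$. The charging works as follows: each optimal half-ball is charged by at most a bounded number of greedy centers, using a packing argument on geodesic balls in a simple polygon. Because each radius-$1/2$ ball has geodesic diameter at most $1$ by the triangle inequality, this already gives a \wordGDCover.

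\textbf{Step 3: convert the cover to a partition.} Here Theorem~\ref{thm:coverispartition} does not transfer: a geodesic Voronoi-type partition preserves radius but not diameter, since a Voronoi cell restricted to the subpolygon can have a larger intrinsic geodesic distance. I would therefore partition $P$ by the geodesic Voronoi diagram of the centers, restricted to their balls. I would then bound the intrinsic diameter of each cell $C$ using its star-shapedness with respect to the center $c$: for $p\in C$, the shortest path $\spath{c}{p}$ stays in $C$ for geodesic Voronoi cells in a simple polygon. Any two points of $C$ are thus within intrinsic distance $1/2+1/2$ via $c$. The cells of Voronoi diagrams of geodesic balls are geodesically star-shaped, so this works, but the cells may be disconnected where balls clip. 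The extra $6\cdot\opt$ pieces, giving $15$ instead of $9$, would come from adding redundant centers to handle pockets: parts of a cell separated from the rest by reflex vertices, where the geodesic to $c$ leaves the ball. I would charge each such pocket to a boundary interval of an optimal piece, at most a constant number per optimal piece.

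\textbf{Running time and main obstacle.} The running time follows from the same $\ordertilde{\opt^3 n^2}$ implementation as the radius algorithm. The main obstacle is the charging constant in Step 3. I expect to need a careful argument that each optimal \GDpiece meets at most $O(1)$ pockets, and the stated constant $15$ suggests exactly $6$ extra per optimal piece, probably via a hexagonal packing bound. I would attempt that by showing that greedy centers charged to one optimal piece are pairwise more than $1/2$ apart and lie within distance $1$ of its center, which allows at most $6$ such centers.
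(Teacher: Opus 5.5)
Your Step~1 is false, and Steps~2--3 inherit the error. In a simple polygon the geodesic radius is not half the geodesic diameter. A convex polygon is a simple polygon in which geodesic and Euclidean distances coincide, and the equilateral triangle of side $1$ has diameter $1$ but radius $1/\sqrt{3} > \frac{1}{2}$; the paper points out exactly this example. So an optimal \wordGDPartition does not give \opt geodesic balls of radius $\frac12$. The radius-$\frac12$ rerun of Section~\ref{sec:radius} in your Step~2 is therefore charged against the optimum for radius-$\frac12$ balls, which can be strictly larger than the GD optimum: already the boundary of that triangle needs two such balls.

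Step~3 does not repair this, and it addresses a non-problem. Once the radius-$\frac12$ balls cover, every geodesic Voronoi cell lies in the radius-$\frac12$ geodesic ball of its site, so it has diameter at most $1$ through the site. That is all Lemma~\ref{lma:is_GD_partition} uses, so no pockets need extra centers. The $6\cdot\opt$ you assign to pockets has no argument behind it. The packing bound you propose for it is also false: at least $13$ points pairwise more than $\frac12$ apart fit within distance $1$ of a point (the point itself and $12$ equally spaced points on the unit circle). Finally, the redundant boundary centers must be placed \emph{before} \greedyinterioralgo runs. Otherwise interior greedy centers near $\partial P$ need not be Euclidean-separated, and no packing argument applies.

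The missing ideas are a boundary step tailored to diameter, and a container for the packing argument that every \GDpiece actually fits into.

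\textbf{Boundary.} The paper covers $\partial P$ with the Abrahamsen--Rasmussen greedy. This yields at most $2\cdot\opt-1$ \GDpiece{}s, charged directly to the GD optimum (Lemma~\ref{lma:GD_boundary_cover}). Each piece is cut off by a geodesic chord of length at most $1$, so the leftover region $P'$ has perimeter at most $2\cdot\opt-1$. Placing centers every $\frac12$ along $\partial P'$ gives at most $4\cdot\opt$ radius-$\frac12$ balls, and these cover every point within Euclidean distance $\frac14$ of $\partial P$. Consequently the greedy interior centers (radius $\frac12$) are pairwise more than $\frac12$ apart in Euclidean distance.

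\textbf{Interior.} For the interior charge the paper replaces the ball by a unit square. Any \GDpiece lies in its bounding box of side at most $1$. Cutting that square into a $3\times 3$ grid of cells of diameter $\sqrt2/3<\frac12$ shows it contains at most $9$ interior centers.

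This gives $2+4+9=15$. The partition is $\Q'_b$ together with the geodesic Voronoi diagram of $C''_b\cup C_i$ inside $P'$.
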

We conclude the paper in Section~\ref{sec:conclusion} by discussing avenues for further research. 

\begin{remark}
Our techniques also apply to the setting where we measure geodesic distances by the $\ell_1$-metric, and can achieve matching approximations using a very similar argument. While there is not the same uncertainty of how to optimally cover even constant complexity convex regions with $\ell_1$ disks (squares) as there is with $\ell_2$ disks, the proof for NP-hardness for covering a \textit{simple} polygon with pieces that fit in unit squares from \cite{abrahamsen_hardness_2024} also applies to pieces with unit geodesic radius in the $\ell_1$ metric by rotating the construction by 45 degrees.
\end{remark}

\section{Equivalence of Small GR-Cover and Small GR-Partition}\label{sec:equivalence}
We begin this section by proving our first (and simplest) result, Theorem~\ref{thm:coverispartition}. To this end, we first establish a useful lemma that will be used in its proof.

\begin{lemma}\label{lma:covertopartition}
    Consider a \wordGRCover $\Q$ of a polygonal domain $P$. There is a \wordGRPartition $\Q'$ of $P$ with $|\Q'| = |\Q|$.
\end{lemma}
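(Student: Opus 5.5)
Let $\Q = \{Q_1, \dots, Q_k\}$ be the \wordGRCover and fix, for each $i$, a center $c_i \in Q_i$ with $Q_i \subseteq \ballgr[Q_i]{1}{c_i}$. Since $Q_i \subseteq P$, any path inside $Q_i$ is also a path inside $P$, so $\dg{c_i}{q} \le \dg[Q_i]{c_i}{q} \le 1$ for every $q \in Q_i$. Hence $P = \bigcup_i \ballgr{1}{c_i}$, and it suffices to partition $P$ into $k$ pieces $R_1,\dots,R_k$ with $c_i \in R_i$ and geodesic radius of $R_i$ (measured inside $R_i$) at most $1$. We take the \emph{geodesic Voronoi partition}: $R_i$ is the closure of the set of points $p \in P$ with $\dg{p}{c_i} < \dg{p}{c_j}$ for all $j \ne i$, with ties broken consistently (e.g.\ by index) so that the regions are interior disjoint. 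We may assume the centers are distinct; repeated centers are handled by splitting one Voronoi cell among the duplicates, or by moving a duplicate slightly.

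The key claim is that each cell is star-shaped with respect to geodesics from its site: if $p \in R_i$, then the whole shortest path $\spath{c_i}{p}$ lies in $R_i$. The proof is the standard one. Let $q$ lie on $\spath{c_i}{p}$ and suppose $\dg{q}{c_j} < \dg{q}{c_i}$ for some $j$. By the triangle inequality,
\[
\dg{p}{c_j} \le \dg{p}{q} + \dg{q}{c_j} < \dg{p}{q} + \dg{q}{c_i} = \dg{p}{c_i},
\]
which contradicts $p \in R_i$. The tie cases use the same inequality with $\le$ together with the consistent tie-breaking rule. This argument uses only the metric $d_P$, so it is valid in polygonal domains even when shortest paths are not unique: we pick any one shortest path $\spath{c_i}{p}$.

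Consequently $\dg[R_i]{c_i}{p} \le |\spath{c_i}{p}| = \dg{c_i}{p} \le 1$. The last inequality holds because $p$ is covered by some $\ballgr{1}{c_j}$, and $p$ is closest to $c_i$. So every $R_i$ is a \GRpiece, the cells cover $P$ and are interior disjoint, and any empty cells can be discarded or padded. Thus $|\Q'| \le k$. Equality holds because a partition with fewer pieces can be refined into exactly $k$ pieces by splitting a piece, since subdividing a star-shaped cell along geodesics from its center preserves radius. This refinement is only a technicality.

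The main obstacle is regularity: the cells must be genuine polygons (closed, interior-disjoint, no lower-dimensional slivers), and the tie-breaking must keep the star-shapedness. Bisectors in geodesic Voronoi diagrams consist of hyperbolic arcs and line segments, so the pieces are not literally polygons. I would handle this by allowing curved boundaries, in keeping with the paper's usage, or by noting that the definition only needs regions. I would also check the closure step: the closure of a set that is star-shaped with respect to geodesics stays star-shaped, by a limit argument using continuity of $d_P$.
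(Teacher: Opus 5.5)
Your proposal is correct and follows the same route as the paper: take the centers of the cover, form their geodesic Voronoi diagram in $P$, and use that each point's nearest center lies within geodesic distance $1$ of it. The paper's short proof leaves several details implicit, and you fill them in: the star-shapedness argument (each cell contains a shortest path from its site to every point of the cell, so the radius bound holds with distances measured inside the cell), the observation that $d_P \le d_{Q_i}$ for arbitrary pieces, and the handling of ties and duplicate centers.
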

\begin{proof}
    Consider the centers $C_\Q$ of the geodesic disks used by $\Q$ to cover $P$. 
    Construct the geodesic Voronoi diagram on $C_\Q$; let this be $\Q'$. 
    Every point $p$ in $P$ is covered by at least one geodesic disk in $\Q$, so the closest disk center $c \in C_\Q$ to $p$ has $\dg{c}{p} \leq 1$. 
    Thus, each Voronoi cell is a \GRpiece showing that $\Q'$ is a \wordGRPartition as required.
\end{proof}

\begin{proof}[Proof of Theorem~\ref{thm:coverispartition}]
    Consider an optimal \wordGRCover $\Q$ and an optimal \wordGRPartition $\Q'$ of a polygonal domain $P$. Since all \wordGRPartition{}s are \wordGRCover{}s, $|\Q| \leq |\Q'|$. By Lemma~\ref{lma:covertopartition}, there is a \wordGRPartition $\Q''$ of $P$ with $|\Q''| = |\Q|$. Therefore, $|\Q'| \leq |\Q''| = |\Q|$; showing that $|\Q| = |\Q'|$ as required.
\end{proof}
We note that a similar result was established in \cite[Theorem~10]{aamand_covering_2026} for the case of covering and partitioning polygonal domains with pieces that fit inside a unit square. 
We conclude this section with a useful remark.

\begin{remark}\label{rmk:partition_from_centers}
    The geodesic Voronoi diagram of a simple polygon $P$ for a set of $m$ centers can be computed in $\order{n + m \log{m}}$ time \cite[Theorem 8.1]{oh_optimal_2019}.
    Since a \wordGRCover $\Q$ of $P$ can be efficiently represented by its centers $C_\Q$, we can compute a \wordGRPartition $\Q'$ of $P$ with $|\Q'| = |\Q|$ from (the centers $C_\Q$ of) $\Q$ in $\order{n + |\Q| \log{|\Q|}}$ time.
\end{remark}

\section{Approximation Algorithm for Small GR-Cover and Partition}\label{sec:radius}

In this section, we prove Theorem~\ref{thm:GR_approx}. Throughout this section, \opt denotes the number of pieces in an optimal \wordGRCover of the input (simple) polygon $P$. We first cover the boundary of $P$ using at most $2 \cdot \opt$ pieces (we actually only define a set of centers; the pieces are simply the unit geodesic ball centered on this set), using an algorithm by Rabanca and Vigan~\cite{rabanca_covering_2015}. Their correctness proof is quite involved; we simplify it substantially by invoking a result of Abrahamsen and Rasmussen~\cite{abrahamsen_partitioning_2022} --- see Lemma~\ref{lma:boundary_partition_greedy}. 
We then place an additional $2 \cdot \opt$ centers to cover points ``close'' to $\partial P$. 
Although this redundancy may appear wasteful, it provides a crucial structural property (Lemma~\ref{lma:covers_half}), which enables a simple greedy strategy for covering the interior using at most $5 \cdot \opt$ pieces. 
In Section~\ref{sec:Gr_runtime}, we show that all of these subroutines can be implemented in polynomial time, thereby establishing Theorem~\ref{thm:GR_approx}.

\subsection{Algorithm description}\label{sec:GR_algo}
We begin by stating a lemma concerning a partition of the boundary of a simple polygon, which is a slight generalization of a result by Abrahamsen and Rasmussen~\cite[Lemma~3]{abrahamsen_partitioning_2022}. Their proof immediately extends to our setting:

\begin{lemma}[Lemma 3 of \cite{abrahamsen_partitioning_2022}]\label{lma:boundary_partition_nongreedy}
    Let $\S$ be a partition of a simple polygon $P$ into connected pieces and let $\S_b = \{S \in \S \mid \partial S \cap \partial P \neq \emptyset\}$, i.e., $\S_b$ is the set of pieces in $\S$ which touch the boundary of $P$. Then, there is a partition of $\partial P$ into contiguous intervals $\I$ such that each interval $I \in \I$ is a subset of some $S \in \S_b$ and $|\I| \leq \max{(1, 2 \cdot |\S_b|-2)}$. 
\end{lemma}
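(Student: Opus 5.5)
The plan is to trace the boundary cyclically and bound the number of times the ``owning'' piece changes, using planarity of $P$ (simple, so its interior is a disk) together with connectivity of the pieces. If $|\S_b| = 1$, the single boundary piece contains all of $\partial P$, and one interval suffices. Otherwise, we start from a maximal decomposition of $\partial P$ into contiguous intervals, each contained in a single piece of $\S_b$. We then merge consecutive intervals belonging to the same piece, so that cyclically adjacent intervals come from different pieces. This yields a cyclic sequence of pieces $S_{i_1}, S_{i_2}, \dots, S_{i_m}$ in which no two consecutive entries are equal.

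The key step is to show that this cyclic sequence contains no pattern $a \dots b \dots a \dots b$ with $a \neq b$. Suppose intervals of $S_a$ and $S_b$ alternate around $\partial P$. Since $S_a$ is connected, we can choose points $x, x'$ of $S_a$ on $\partial P$ in the two $a$-intervals and join them by a path inside $S_a$. Similarly, we join points $y, y'$ of $S_b$ by a path inside $S_b$. Because $x, y, x', y'$ alternate on the boundary circle of the closed disk $P$, these two paths must intersect by the Jordan curve theorem (the standard ``chords of a disk cross'' argument). Their intersection would be a point shared by $S_a$ and $S_b$ in the interior sense.

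This step is the main obstacle, because pieces in a partition are only interior disjoint and may share boundary points, so a crossing at a common boundary point is not immediately a contradiction. I would handle this by choosing $x, x', y, y'$ in the relative interiors of their intervals and routing each path through the interior of its piece, except near its endpoints. This needs some regularity of the pieces, for instance that each piece is the closure of its interior and is polygonal, or that its interior is connected. Alternatively, I would perturb the paths so that they meet transversally at an interior point of one of the pieces. If regularity fails, I would fall back on treating a piece whose interior is disconnected as contributing separate components, though this risks weakening the count.

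With the no-$abab$ property in hand, the cyclic sequence is a Davenport--Schinzel-type sequence of order 2 on $k = |\S_b|$ symbols. I would prove $m \le 2k - 2$ for $k \ge 2$ by induction. Some symbol $a$ has two consecutive occurrences that bound a stretch containing no $a$. By the forbidden pattern, the symbols inside that stretch appear nowhere else. Contracting the stretch, or deleting a symbol that occurs exactly once and merging its now-equal neighbors, reduces $k$ by one and $m$ by at most two. The base case $k = 2$ gives the sequence $ab$ with $m = 2 = 2k - 2$. This yields $|\I| \le \max(1, 2|\S_b| - 2)$.
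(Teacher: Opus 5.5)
\textbf{The crossing step has a genuine gap.} Your architecture is the natural self-contained proof: rule out a pattern $a\ldots b\ldots a\ldots b$ by crossing chords in the disk $P$, then apply the cyclic Davenport--Schinzel bound $2k-2$. The paper gives no proof of its own and simply defers to Abrahamsen--Rasmussen. The problem is that you leave the crossing step as a menu of fixes, and two of them cannot work.

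Let $P=[-1,1]^2$ be cut by its diagonals into four closed triangles. Let $S_a$ be the top and bottom triangles and $S_b$ the left and right ones. Both pieces are connected and polygonal, equal the closure of their interiors, are interior-disjoint, and cover $P$. Yet $\partial P$ reads $a,b,a,b$ and cannot be split into fewer than $4>2\cdot 2-2$ admissible intervals. Consequences:
\begin{itemize}
\item The statement is false for merely connected pieces.
\item ``Closure of its interior and polygonal'' is not a sufficient hypothesis.
\item No perturbation helps: every path in $S_a$ from the top edge to the bottom edge, and every path in $S_b$ from the left edge to the right edge, passes through the origin, which is interior to neither piece.
\item Your fallback of splitting pieces into components destroys the count, as you suspected.
\end{itemize}

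\textbf{How to close it.} Commit to the hypothesis that every piece is regular closed \emph{and} has connected interior, e.g.\ a simple polygon, which is what ``polygon'' means in this paper. Then the argument goes through:
\begin{itemize}
\item Pick $x$ in the relative interior of an $a$-interval, not a vertex of $P$ or of any piece. Near $x$ the polygon $S_a$ is a half-disk, necessarily the inner half-disk of $P$. So a short segment enters $\mathrm{int}(S_a)$, and you continue inside the connected open set $\mathrm{int}(S_a)$ to the analogous point near $x'$.
\item Do the same for $y,y'$ in $S_b$.
\item The two arcs have interiors in the disjoint open sets $\mathrm{int}(S_a)$ and $\mathrm{int}(S_b)$, both contained in $\mathrm{int}(P)$. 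Their four endpoints are distinct and alternate on $\partial P$. This contradicts the crossing lemma.
\end{itemize}
Regular closedness is also what makes the owner of each positive-length boundary interval unique, so your merged cyclic sequence is well defined. Without it, say if $S_a$ has a segment running along $\partial P$ on the boundary of $S_b$, an unlucky choice of owners can produce an $abab$ pattern even when a good choice exists.

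\textbf{The counting step needs tightening.} ``Contracting the stretch'' removes all of its symbols at once, not one, so it does not lower $k$ by one. Instead, take cyclically consecutive equal symbols with minimum gap. Every symbol inside that gap occurs exactly once overall: its occurrences are confined to the gap by the forbidden pattern, and a repeat would give a smaller gap. Deleting such a symbol and merging equal neighbours lowers $k$ by one and $m$ by at most two, and preserves both properties. If no symbol repeats, $m=k\le 2k-2$ directly.
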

We describe the algorithm \GRgreedyboundaryalgo, given by Rabanca and Vigan in \cite[Section 2.2]{rabanca_covering_2015}, to cover the boundary of $P$ with \GRpiece{}s. 
Start from a vertex $v$ of $P$ and place disk centers greedily: beginning at $v$, cover as far along $\partial P$ (walking counterclockwise) as possible such that the traversed contiguous interval $I$ is contained in a unit geodesic disk. 
Place a center to cover $I$ and repeat until returning to $v$. Let $\opt'$ denote the optimal number of \GRpiece{}s required to cover $\partial P$ --- note that $\opt' \leq \opt$.
Moreover, from (the proof of) Theorem~\ref{thm:coverispartition}, $\opt'$ is also the optimal number of \GRpiece{}s required to \textit{partition} $\partial P$.
We now present a significantly simpler proof than that of \cite{rabanca_covering_2015} of the following lemma:

\begin{lemma}[Theorem 1 of \cite{rabanca_covering_2015}]\label{lma:boundary_partition_greedy}
    \GRgreedyboundaryalgo places at most $2 \cdot \opt' - 1$ disk centers.
\end{lemma}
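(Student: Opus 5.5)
Take an optimal \wordGRCover of $\partial P$ with $\opt'$ pieces; by the proof of Theorem~\ref{thm:coverispartition} we may replace it by the geodesic Voronoi partition $\S$ of $P$ induced by its $\opt'$ centers. Every cell is connected, being geodesically star-shaped around its center. Every point of $\partial P$ lies within unit geodesic distance of some center, so its nearest center is also within unit distance.

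The claimed bound is only meaningful once we restrict attention to $\partial P$, which is what the boundary optimum controls. So I would apply Lemma~\ref{lma:boundary_partition_nongreedy} to the partition of $P$ given by the Voronoi cells, which are connected. This gives a partition $\I$ of $\partial P$ into contiguous intervals with $|\I| \le \max(1, 2\opt' - 2)$. Each interval lies in a single Voronoi cell, and every point of $\partial P$ in that cell is within distance $1$ of the cell's center. Hence each interval $I \in \I$ is contained in a unit geodesic disk, namely the one centered at that cell's center.

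Next I would compare the greedy against $\I$ by the standard interval-cover exchange argument along the cyclic boundary. Cut $\partial P$ at the start vertex $v$. The interval of $\I$ containing $v$ may be split by $v$ into two pieces, so viewing $\partial P$ as a linear segment from $v$ back to $v$ gives at most $|\I|+1$ consecutive intervals, each contained in a unit geodesic disk. Ordering these intervals $J_1, \dots, J_m$ along the walk, I show by induction that after $i$ greedy steps the greedy front has advanced at least to the end of $J_i$. The base case holds because the greedy extends as far as possible from $v$ and $J_1$ is a feasible extension. For the step, the greedy's $(i{+}1)$-st interval starts at or before the start of the unprocessed part of $J_{i+1}$. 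The remainder of $J_{i+1}$ is a subset of $J_{i+1}$, and subsets of sets contained in a unit disk are still contained in that disk, so this feasibility is monotone. Maximality of the greedy step then lets it reach at least the end of $J_{i+1}$. Therefore the greedy uses at most $m \le |\I| + 1 \le 2\opt' - 1$ centers when $\opt' \ge 2$.

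The degenerate case $\opt' = 1$ is handled separately: all of $\partial P$ lies in one unit geodesic disk, so the greedy covers everything with one center, which equals $2\opt'-1$.

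The main obstacle is the step where greedy overlap with the next optimal interval must not hurt. Greedy intervals start wherever the previous greedy interval ended, not at the boundaries of the $J_i$. The fix is the monotonicity noted above: any contiguous subinterval of an interval contained in a unit disk is itself contained in that disk. I would state this explicitly, so that the induction front argument is valid despite the misalignment. I would also double-check the accounting at the wrap-around at $v$, since the $+1$ from splitting the interval through $v$ is exactly what produces the bound $2\opt'-1$ rather than $2\opt'-2$.
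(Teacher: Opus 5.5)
Your proof is correct and follows the same plan as the paper. You apply Lemma~\ref{lma:boundary_partition_nongreedy} to a connected partition induced by an optimal boundary cover (you take the Voronoi cells directly, where the paper pads the optimal pieces with the uncovered components), obtain at most $2\cdot\opt'-2$ intervals each inside a unit geodesic disk, and then show the greedy uses at most one more interval. The only differences are these: your last step is a cut-at-$v$ greedy-stays-ahead induction rather than the paper's pigeonhole argument that the first $k-1$ greedy start points lie in distinct half-open intervals of $\I$, and your separate handling of $\opt'=1$ covers a case that the paper's bound $|\I|\le 2|\S_b|-2$ (which drops the $\max(1,\cdot)$) does not literally handle.
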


Consider an optimal partition $\S_b$ of $\partial P$ with \GRpiece{}s; $|\S_b| = \opt'$.
Let $\S_i$ denote the collection of connected pieces of the region inside $P$ that $\S_b$ leaves uncovered.
Let $\S = \S_b \cup \S_i$.
By Lemma~\ref{lma:boundary_partition_nongreedy}, there is a partition of $\partial P$ into contiguous intervals $\I$ where each $I \in \I$ is a \GRpiece and $|\I| \leq 2 \cdot |\S_b| - 2 \leq 2 \cdot \opt' - 2$.
Let $\I' = \{I'_1, I'_2 \dots I'_{k}\}$ denote the contiguous intervals constructed by \GRgreedyboundaryalgo{}.
We will show that $|\I'|= k \leq |\I| + 1 \leq 2\cdot \opt' -1$.
Let $S' = \{s'_i \mid 1 \leq i \leq k\}$ where $s'_i$ is the start-point of $I'_i \in \I'$; consider an $s'_j \in S'$ where $j \leq k - 1$.
For two points $a,b \in \partial P$, we use $\partial P [a,b]$ to denote the counterclockwise walk along $\partial P$ from $a$ to $b$.
Since $\I$ partitions $\partial P$, there is an $I = \partial P[s_I,t_I] \in \I$ which contains $s'_j$.
Recall that $I$ is a \GRpiece; if $s'_{j+1} \in \partial P[s_I,t_I)$, then $I'_j = \partial P[s'_j, s'_{j+1}]$ would not be maximal walk, a contradiction. So:
\begin{observation}\label{obs:next_one_is_different}
    $s'_{j+1} \notin \partial P[s_I,t_I)$.
\end{observation}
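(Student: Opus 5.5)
The statement to establish is Observation~\ref{obs:next_one_is_different}: $s'_{j+1} \notin \partial P[s_I,t_I)$ when $s'_j \in I = \partial P[s_I,t_I]$. I would argue by contradiction, using only that $I$ is a \GRpiece and that \GRgreedyboundaryalgo extends each interval maximally.

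Suppose $s'_{j+1} \in \partial P[s_I,t_I)$. The interval $I'_j$ starts at $s'_j$ and ends at $s'_{j+1}$. There are two orderings to handle.

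The main case is $s'_{j+1}$ lying after $s'_j$ within $I$. Then $I'_j = \partial P[s'_j,s'_{j+1}]$ is a proper sub-walk of $\partial P[s'_j,t_I]$. The latter is contained in $I$, so it lies in a unit geodesic ball centered at the center of $I$; being a sub-walk of $I$, it is also contiguous.

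Since $s'_{j+1} \neq t_I$, the walk $\partial P[s'_j,t_I]$ is strictly longer along $\partial P$ than $I'_j$. Hence the greedy step could have continued past $s'_{j+1}$ while staying inside a single unit geodesic disk. This contradicts the maximality of the greedy walk.

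The second case is $s'_{j+1}$ lying in $\partial P[s_I,s'_j)$, i.e.\ before $s'_j$ on $I$. Then $I'_j$ would have wrapped all the way around $\partial P$. For $j \le k-1$ this is impossible: the algorithm stops upon returning to $v$, so consecutive start-points advance monotonically from $v$ and cannot wrap before the final interval.

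A degenerate case is $s'_{j+1} = s'_j$. It cannot occur, because the greedy step always makes positive progress: a small neighbourhood of any boundary point lies in a unit geodesic disk.

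The step needing the most care is the monotonicity/no-wraparound bookkeeping in the second case, because it relies on $j \le k-1$ and on how the algorithm terminates at $v$. The maximality contradiction in the main case is immediate from the definition of the greedy choice.
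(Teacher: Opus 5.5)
Your main case, $s'_{j+1}\in\partial P(s'_j,t_I)$, is exactly the paper's proof, which consists only of this maximality argument. The gap is in your second case, and the statement itself fails there, so it cannot be repaired.

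The claim that $I'_j$ ``would have wrapped all the way around $\partial P$'' is false. Walking counterclockwise from $s'_j$ to a point of $\partial P[s_I,s'_j)$ covers the rest of $I$, then $\partial P\setminus I$, then re-enters $I$ at $s_I$. It misses the sub-arc $\partial P(s'_{j+1},s'_j)$ of $I$, and it crosses $v$ only if $v\notin\partial P(s'_{j+1},s'_j]$. If $I$ contains $v$ in that sub-arc, $I'_j$ never passes $v$. The monotonicity of start-points measured from $v$ then gives no contradiction.

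For $j=k-1$ this configuration really occurs. Take the convex pentagon with vertices $(0,0),(1.4,0),(3,0.01),(3,0.02),(0,0.02)$; being convex, its geodesic distances are Euclidean. Start the greedy at the vertex $v=(1.4,0)$. Use the optimal two-piece partition cut along $x=1.5$; it is optimal because $\partial P$ has diameter about $3$, so one unit disk cannot cover it. Then $\I=\{I_L,I_R\}$, where $I_L$ is the arc with $x\le 1.5$, running counterclockwise from $s_{I_L}=(1.5,0.02)$ over the top, the left side and the bottom to $t_{I_L}$. The first greedy walk goes right along the bottom, up the short edge, and left along the top until $s'_2\approx(1,0.02)$, since no unit disk contains points more than $2$ apart. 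From $s'_2$ the second walk reaches $v$ inside $[0,1.4]\times[0,0.02]$, so $k=2$. For $j=1\le k-1$ we get $s'_1=v\in I_L$ but $s'_2\in\partial P[s_{I_L},s'_1)\subseteq\partial P[s_{I_L},t_{I_L})$, contradicting the statement.

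What you can actually prove is the following.
\begin{itemize}
\item[(a)] For every $j\le k-1$, $s'_{j+1}\notin\partial P(s'_j,t_I)$. This is your main case.
\item[(b)] For $j\le k-2$, the full statement holds, but the wrap-around case is excluded for a different reason than yours. Since $I'_j$ is not the last interval, $v\notin\partial P(s'_j,s'_{j+1}]$. So if $s'_{j+1}\in\partial P[s_I,s'_j)$, then $v\in\partial P(s'_{j+1},s'_j]\subseteq I$. The greedy walk from $s'_{j+1}$ then covers $\partial P[s'_{j+1},v]\subseteq I$ and terminates at $v$, forcing $j+1=k$.
\end{itemize}
You were right to flag the wrap-around bookkeeping as the delicate step; the paper's one-line proof ignores it entirely. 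Facts (a) and (b) suffice to recover Corollary~\ref{coro:different_intervals} with a small change to its proof, but the observation has to be restated in this weaker form.
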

\begin{figure}[ht]
    \centering
    \includegraphics[width=0.7\linewidth,page=8]{figures_prahlad.pdf}
    \caption{Illustration of Corollary~\ref{coro:different_intervals}. The boundary of $P$ is marked purple; the gray region is outside $P$. If $s'_i,s'_j \in \partial P[s_I,t_I)$, then $s'_{j+1}$ lies after $s'_i$. Therefore $I'_i$ (marked red above $\partial P$ for clarity) and $I'_j$ (marked green) overlap and therefore $\{I'_{\ell} \mid i \leq \ell \leq j\}$ covers $\partial P$.}
    \label{fig:wraparound}
\end{figure}
\begin{corollary}\label{coro:different_intervals}
    No two points in $S' \setminus \{s'_k\}$ are in $\partial P[s_I,t_I)$, for any $I \in \I$.
\end{corollary}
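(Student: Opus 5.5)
We argue by contradiction, relying mainly on Observation~\ref{obs:next_one_is_different} and the greedy maximality of the intervals $I'_\ell$. Suppose two distinct start points $s'_i, s'_j \in S' \setminus \{s'_k\}$ with $i < j \le k-1$ both lie in $\partial P[s_I, t_I)$ for some $I \in \I$. Since $j \le k-1$, the point $s'_{j+1}$ exists, and Observation~\ref{obs:next_one_is_different}, applied to $s'_j$ and the interval $I$ containing it, gives $s'_{j+1} \notin \partial P[s_I,t_I)$.

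Next we use the cyclic order of the start points. The greedy walk starts at $v = s'_1$ and moves counterclockwise, so $s'_1, s'_2, \dots, s'_k$ appear in counterclockwise order along $\partial P$ and do not wrap past $v$ before the last interval. Reading counterclockwise from $s'_i$, we meet $s'_j$ (inside $I$), and then $s'_{j+1}$, which lies outside $\partial P[s_I,t_I)$. Since $s'_i$ and $s'_j$ both lie in the contiguous interval $\partial P[s_I,t_I)$, the stretch $\partial P[s'_i, s'_j]$ lies within $I$. The walk from $s'_j$ to $s'_{j+1}$ must therefore leave $I$ through $t_I$. It then has to continue around $\partial P$ back past $s_I$ and past $s'_i$, because otherwise $s'_{j+1}$ would land back inside $\partial P[s_I,t_I)$ beyond $s'_j$ within the same contiguous stretch, which the observation forbids.

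The key claim, and the main obstacle, is to make this precise. We would carry it out as follows. $I$ is a small GR-piece containing both $s'_i$ and $s'_j$. Because $s'_i \in \partial P[s_I,t_I)$, the greedy step from $s'_i$ could extend at least to $t_I$. Hence $s'_{i+1}$ lies at or beyond $t_I$ counterclockwise from $s'_i$, strictly after $s'_j$ unless the interval wraps. Since $i < j$ and indices increase counterclockwise without wrapping, $s'_{i+1}$ must come no later than $s'_j$. The only way to reconcile this is that the arc from $s'_j$ to $s'_{j+1}$ overlaps the arc from $s'_i$ onward. Then $I'_i \cup \cdots \cup I'_j$ already covers all of $\partial P$, as in Figure~\ref{fig:wraparound}. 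In that case the greedy procedure would have returned to $v$ and stopped at index at most $j < k$, contradicting that $s'_k$ is a start point.

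Hence no such pair exists, for any $I \in \I$. Excluding $s'_k$ matters only because the final interval may be truncated when the walk reaches $v$, so Observation~\ref{obs:next_one_is_different} cannot be applied to it.

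The delicate part is the cyclic-order bookkeeping: distinguishing the wrap-around case from the non-wrap case, and using the fact that $\partial P[s_I,t_I)$ is contiguous so that two start points inside it force the intermediate greedy step to overshoot. We would state explicitly that the greedy start points are strictly monotone in counterclockwise position from $v$, and then do a two-case check on whether $I$ contains $v$.
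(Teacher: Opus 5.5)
Your overall route is the paper's. You assume $s'_i,s'_j$ with $i<j\le k-1$ lie in $\partial P[s_I,t_I)$ and invoke greedy maximality (Observation~\ref{obs:next_one_is_different}). You then conclude that in the wrap-around configuration $I'_i\cup\dots\cup I'_j$ covers $\partial P$, so the walk would reach $v$ at step $j<k$. Your third paragraph is more careful than the paper in one respect. The paper only records $j\neq i+1$, whereas you handle the entire non-wrap case (counterclockwise arc from $s'_i$ to $s'_j$ inside $I$): the step from $s'_i$ reaches $t_I$, hence passes $s'_j$, contradicting that $s'_{i+1}$ comes no later than $s'_j$. Your final contradiction (termination at index at most $j$) is also more precise than the paper's ``a strict subset of $\I'$ covers $\partial P$''.

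Two things need fixing. First, drop your second paragraph. The claim that $\partial P[s'_i,s'_j]$ lies within $I$ is false in exactly the wrap-around case that needs the covering argument; there it is $\partial P[s'_j,s'_i]$ that lies in $I$. The inference that the walk from $s'_j$ must continue past $s_I$ and $s'_i$ ``because otherwise $s'_{j+1}$ would land back inside $\partial P[s_I,t_I)$'' also does not follow. Stopping somewhere in $\partial P[t_I,s_I)$ is entirely consistent with the Observation.

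Second, in your third paragraph, ``the only way to reconcile this'' shows only that you are in the wrap-around case; it says nothing yet about $s'_{j+1}$. Add the sentence that does the work. The start points are ordered counterclockwise from $v$ and $i<j$, so the arc $\partial P[s'_j,t_I]\subseteq I$ passes through $v$ and $s'_i$. By maximality of the step from $s'_j$ (equivalently, your first paragraph's $s'_{j+1}\notin\partial P[s_I,t_I)$), $s'_{j+1}$ lies at or beyond $t_I$, hence the walk from $s'_j$ reaches $v$. That contradicts $j\le k-1$. With that sentence in place of your second paragraph, the argument is complete and matches the paper's, with the non-wrap case handled more explicitly.
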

\begin{proof}
    Assume $s'_i$ and $s'_j$, with $1 \leq i < j \leq k - 1$, are both in $\partial P[s_I,t_I)$, for some $I \in \I$. 
    By Observation~\ref{obs:next_one_is_different}, $j \neq i + 1$; moreover, $s'_{j+1} \not\in \partial P[s_I,t_I)$ and therefore lies after $s'_i$ along $\partial P$.
    Therefore, $\{I'_{\ell} \mid i \leq \ell \leq j\}$, a strict subset of $\I'$, covers $\partial P$, a contradiction. See Figure~\ref{fig:wraparound}.
\end{proof}
We can now prove Lemma~\ref{lma:boundary_partition_greedy}.
\begin{proof}[Proof of Lemma~\ref{lma:boundary_partition_greedy}]
    By Corollary~\ref{coro:different_intervals}, at most one point in $S' \setminus \{s'_k\}$ is in each interval of $\I$. 
    Therefore, $k-1 \leq |\I| \leq 2 \cdot \opt' - 2$; so $|\I'| = k \leq 2 \cdot \opt' - 1$ as required.
\end{proof}

Let $C'_b$ denote the centers placed by \GRgreedyboundaryalgo. We add additional (and, at first glance, redundant) centers to $C'_b$. Let $\I$ denote the collection of contiguous intervals considered by \GRgreedyboundaryalgo{} --- clearly, $|C'_b| = |\I|$. At every endpoint of an interval of $\I$, we place a new geodesic disk center and add this to $C'_b$. We call this subroutine \GRextracentersalgo.

\begin{algorithm}[ht]
\caption{\GRextracentersalgo}\label{alg:extra_centers}
\begin{algorithmic}[1]
\State \textbf{Input:} A set of centers $C'_b$ and a partition $\I$ of $\partial P$.
\State \textbf{Output:} A set of centers $C_b$. 

\State $C_b \gets C'_b$

\For{$I$ in $\I$}
    \State Append $I[0]$ to $C_b$
\EndFor

\State \Return $C_b$

\end{algorithmic}
\end{algorithm}

Let $C_b$ denote the resulting set of centers. 
Clearly, $|C_b| = 2 \cdot|\I|$. 
Even though this is more centers than necessary to cover just the boundary of $P$, this cover gives some useful properties for constructing a cover of the interior. 
Let $\Q_b$ denote the collection of unit geodesic discs centered at $C_b$ --- i.e., $\Q_b = \{\ballg{c} \mid c \in C_b\}$.
Since $|\Q_b| = |C_b| = 2 \cdot |\I|$ and $|\I| \leq 2 \cdot \opt' \leq 2 \cdot \opt$ (by Lemma~\ref{lma:boundary_partition_greedy}), we have:

\begin{corollary}\label{cor:boundary_pieces}
    $|\Q_b| \leq 4 \cdot \opt$. 
\end{corollary}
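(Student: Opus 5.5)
The bound is a direct chain of counts from the construction of $\Q_b$. I will first pin down $|\Q_b|$ in terms of $|\I|$, then bound $|\I|$ using Lemma~\ref{lma:boundary_partition_greedy}, and finally compare the boundary optimum $\opt'$ with the full optimum $\opt$.

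\textbf{Step 1: count centers.} By definition $\Q_b = \{\ballg{c} \mid c \in C_b\}$, so $|\Q_b| \leq |C_b|$. Equality holds unless two centers coincide, and coincidences only help. \GRextracentersalgo starts from $C'_b$, which has exactly one center per interval of $\I$. It then appends one further point $I[0]$ per interval. Hence $|C_b| \leq 2|\I|$.

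\textbf{Step 2: bound the number of greedy intervals.} The intervals in $\I$ are exactly those produced by \GRgreedyboundaryalgo. Lemma~\ref{lma:boundary_partition_greedy} therefore gives $|\I| \leq 2\cdot\opt' - 1 \leq 2\cdot \opt'$.

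\textbf{Step 3: compare the two optima.} Any \wordGRCover of $P$ also covers $\partial P$, since every point of $\partial P$ lies in some piece. Restricting an optimal cover of $P$ to $\partial P$ thus gives $\opt' \leq \opt$. Combining the three steps,
\[
|\Q_b| \leq 2|\I| \leq 2(2\cdot\opt' - 1) \leq 4\cdot\opt .
\]

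\textbf{Main obstacle.} There is essentially no difficulty here. The only point needing care is that Lemma~\ref{lma:boundary_partition_greedy} relies on Lemma~\ref{lma:boundary_partition_nongreedy} with $|\S_b| \geq 1$. The degenerate case $\opt'=1$ is handled separately: one piece covers $\partial P$, so greedy places a single interval and $|\Q_b| \leq 2 \leq 4\cdot\opt$.
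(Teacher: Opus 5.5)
Your proposal is correct and follows the paper's argument essentially verbatim: $|\Q_b| = |C_b| = 2|\I|$, then $|\I| \le 2\cdot\opt' - 1 \le 2\cdot\opt'$ by Lemma~\ref{lma:boundary_partition_greedy}, and finally $\opt' \le \opt$ because any cover of $P$ also covers $\partial P$. Your remark on the case $\opt' = 1$ is extra care that the paper skips: there Lemma~\ref{lma:boundary_partition_nongreedy} gives only $\max(1, 2|\S_b|-2) = 1$ rather than $2\opt'-2$, and your direct observation that greedy then produces a single interval settles that case.
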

We now show that $\Q_b$ not only covers $\partial P$ but also all points that are ``close'' (in the Euclidean sense) to $\partial P$.
Recall that $\spath{p}{q}$ denotes the shortest path between $p$ and $q$ within $P$. 
\begin{lemma}\label{lma:covers_half}
    $\Q_b$ covers every point of $P$ that is a \textbf{Euclidean} distance at most $\frac{1}{2}$ from $\partial P$. 
\end{lemma}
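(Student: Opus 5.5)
Fix a point $p \in P$ with Euclidean distance at most $\frac{1}{2}$ from $\partial P$. Let $q \in \partial P$ be a closest boundary point, so $\de{p}{q} \le \frac12$. The open disk of radius $\de{p}{q}$ around $p$ contains no boundary point, so it lies in $P$. Hence the segment $pq$ lies in $P$, and $\dg{p}{q} = \de{p}{q} \le \frac12$.

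Since $\I$ partitions $\partial P$, the point $q$ lies in some interval $I = \partial P[s,t] \in \I$. The center $c \in C'_b$ placed for $I$ satisfies $\dg{c}{x} \le 1$ for every $x \in I$. Moreover, $C_b$ also contains the endpoint $s = I[0]$. Since $t$ is the start point of the next interval, $t \in C_b$ as well. So we have three centers available: $c$, $s$ and $t$.

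I will show that one of them is within geodesic distance $1$ of $p$. I expect this step to be the main obstacle. The natural case split is on $\dg{s}{q}$ and $\dg{q}{t}$. If $\dg{s}{q} \le \frac12$, the triangle inequality gives $\dg{s}{p} \le 1$, and the same holds for $t$. So assume $q$ is at geodesic distance more than $\frac12$ from both $s$ and $t$; we must then show $\dg{c}{p} \le 1$.

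For this case I would use the geometry of shortest paths in a simple polygon. Consider the geodesic triangle formed by $\spath{c}{s}$, $\spath{c}{t}$ and the boundary chain $\partial P[s,t]$. Since $I$ lies in $\ballg{c}$, every shortest path from $c$ to a point of $I$ has length at most $1$. The point $q$ is on this chain, and $p$ lies in the region near $q$ on the interior side. The first idea is to show that $\spath{c}{p}$ either passes close to $q$, or reaches $p$ through a funnel whose boundary points have distance at most $1$ from $c$.

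The fallback argument is a convexity-type property of geodesic distance: the function $x \mapsto \dg{c}{x}$ restricted to the segment $qp$, extended slightly, behaves like a Euclidean distance in a funnel. Concretely, take the last vertex $v$ on $\spath{c}{q}$ that is visible to both $q$ and $p$; then $\dg{c}{p} \le \dg{c}{v} + \de{v}{p}$. I would try to bound $\de{v}{p}$ using the fact that $\de{p}{q} \le \frac12$ while the points $s,t$ are geodesically far from $q$. The intuition is that the ball $\ballg{c}$ then contains a neighborhood of $q$ of radius comparable to $\frac12$ on the interior side. I am not certain this exact route closes, and I would adjust the case analysis if the bound fails. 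One adjustment is to compare $p$ against $s$ or $t$ using that both are endpoints of maximal greedy walks, so moving past them leaves $\ballg{c}$.
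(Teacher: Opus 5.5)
Your preliminary steps are fine. You have $\overline{pq}\subset P$, the three centers $c$, $s$, $t$ all lie in $C_b$ (the last because $t$ starts the next interval), and the case $\min\{\dg{s}{q},\dg{t}{q}\}\le\frac12$ is correctly handled by the triangle inequality. The gap is the remaining case: it is the whole content of the lemma, and you leave it open. The funnel bound $\dg{c}{p}\le\dg{c}{v}+\de{v}{p}$ with $v$ on $\spath{c}{q}$ cannot settle it, because it never uses your hypothesis that $s$ and $t$ are geodesically far from $q$. Without that hypothesis the conclusion $\dg{c}{p}\le 1$ is false. Suppose $\spath{c}{t}$ is a straight segment of length close to $1$ running just above $q$, so that it separates $q$ from $p$, and ending at a point $t$ slightly past $q$. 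Then $\spath{c}{q}$ is a single segment, so $v=c$, and $\de{c}{p}\approx\sqrt{\de{c}{q}^2+\frac14}$ exceeds $1$, even though all of $I$ lies within geodesic distance $1$ of $c$ (here it is $t$ that covers $p$). The information about $s$ and $t$ is only accessible through the paths $\spath{c}{s}$ and $\spath{c}{t}$. You name the geodesic triangle they bound, but you never say what it gives you.

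The missing idea is to use that triangle as a region, and to split the length budget at the point where $\overline{pq}$ leaves it; this is the paper's proof. Let $R$ be the region enclosed by $\spath{c}{s}\cup\spath{c}{t}\cup\partial P[s,t]$.

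First, $R\subseteq\ballg{c}$. For $x\in R$, extend the last segment of $\spath{c}{x}$ beyond $x$ until it first meets $\partial P$ at a point $y$. The extended path is still taut, hence equals $\spath{c}{y}$. Shortest paths from $c$ form a tree, so it cannot cross $\spath{c}{s}$ or $\spath{c}{t}$; hence $y\in I$ and $\dg{c}{x}\le\dg{c}{y}\le 1$.

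Second, if $\overline{pq}$ avoids $\spath{c}{s}\cup\spath{c}{t}$, then $p\in R$ by the Jordan curve theorem. Otherwise $\overline{pq}$ meets, say, $\spath{c}{t}$ at some $q'$. Since $\dg{c}{q'}+\dg{q'}{t}=\dg{c}{t}\le 1$, one of $c$, $t$ is within $\frac12$ of $q'$, while $\de{q'}{p}\le\frac12$.

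No case split on $\dg{s}{q}$, $\dg{t}{q}$ is needed. Your split is nonetheless compatible with this argument: in the crossing subcase, $\dg{t}{q}>\frac12$ gives $\dg{c}{p}\le\dg{c}{t}-\dg{q'}{t}+\de{q'}{p}<\frac12+\de{q}{p}\le 1$. So your reduction is not wrong, but it is the crossing-point argument that actually proves it.
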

\begin{proof}
    Consider a point $p \in P$ which is within a distance of $\frac{1}{2}$ from $\partial P$. Consider $q \in \partial P$, a point on the boundary closest to $p$. We have $\de{p}{q} \leq \frac{1}{2}$ and that no point in the interior of $\overline{pq}$ is on $\partial P$ --- otherwise there is another point $q'$ closer to $p$ on $\partial P$. So, we have $\overline{pq} \subset P$. Since $\I$ covers $\partial P$, $q$ lies in some contiguous interval $I \in \I$ of $\partial P$. Let $I = \partial P[s_I, t_I]$ and let $c_I \in C'_b \subseteq C_b$ be the geodesic disk center placed by \GRgreedyboundaryalgo for $I$. We have two cases (see Figure~\ref{fig:euclidean_at_most_half}): $\overline{pq}$ intersects $\spath{s_I}{c_I} \cup \spath{c_I}{t_I}$ or it does not.
    \begin{figure}[ht]
        \centering
        \includegraphics[width=0.5\linewidth,page=5]{figures_prahlad.pdf}
        \caption{Illustration of Lemma~\ref{lma:covers_half}. The boundary of $P$ is drawn in purple, the gray region is outside $P$. The shortest paths from $c_I$ to $s_I$ and $t_I$ are in dashed green. We have $p_1,p_2 \in P$ and they are at most an Euclidean distance of $\frac{1}{2}$ away from $\partial P$. The points $q_1,q_2 \in \partial P$ are the closest point on $\partial P$ to $p_1$ and $p_2$. Since $\overline{p_1q_1} \cap (\spath{s_I}{c_I} \cup \spath{c_I}{t_I}) = \emptyset$, $p_1 \in \ballg{c_I}$. On the other hand, $\overline{p_2q_2} \cap \spath{c_I}{t_I} \neq \emptyset$; since this intersection point is closer to $t_I$ than to $c_I$, $p_2 \in \ballg{t_I}$.}
        \label{fig:euclidean_at_most_half}
    \end{figure}
    
    \begin{case}Assume that $\overline{pq}$ does not intersect $\spath{s_I}{c_I} \cup \spath{c_I}{t_I}$. Then, $p$ must lie inside the region\footnote{$R$ is a degenerate weakly simple polygon as some of its sides could overlap --- see Figure~\ref{fig:euclidean_at_most_half}.} $R$ within the closed curve $\spath{s_I}{c_I} \cup \spath{c_I}{t_I} \cup \partial P[t_I,s_I]$ by the Jordan curve theorem. Since $I$ is contiguous, $R$ is within the geodesic disk centered at $c_I$, and thus $p$ is covered by $\Q_b$.
    \end{case}
    \begin{case}
        Assume that $\overline{pq}$ intersects $\spath{s_I}{c_I} \cup \spath{c_I}{t_I}$. We assume that there is a point $q' \in \overline{pq} \cap \spath{c_I}{t_I}$; the other case is symmetric. Note that $\dg{c_I}{t_I} \leq 1$ --- i.e., the length of $\spath{c_I}{t_I}$ is at most 1. Since $q' \in \spath{c_I}{t_I}$, either $\dg{c_I}{q'} \leq \frac{1}{2}$ or $\dg{t_I}{q'} \leq \frac{1}{2}$. Since $\de{q'}{p} \leq \de{q}{p} \leq \frac{1}{2}$, $p$ is within a geodesic distance of $1$ from either $c_I$ or $t_I$. Since both $c_I$ and $t_I$ belong to $C_b$, it follows that $p$ is covered by $\Q_b$.
    \end{case}
    These cases are exhaustive, so the claim follows.
\end{proof}
The above property implies the following corollary:

\begin{corollary}\label{cor:euclidean_reduction}
    Consider two points $p$ and $q$ in $P$ which are not covered by $\Q_b$. If $\de{p}{q} \leq 1$, $\dg{p}{q} \leq 1$.
\end{corollary}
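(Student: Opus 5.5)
The plan is to show that the Euclidean segment $\overline{pq}$ lies entirely in $P$. Once that holds, $\spath{p}{q} = \overline{pq}$, and so $\dg{p}{q} = \de{p}{q} \leq 1$. We argue by contradiction: suppose $\overline{pq} \not\subseteq P$. Since $P$ is closed and $p,q \in P$, the segment must leave $P$ somewhere. So there is a point $x \in \overline{pq}$ with $x \notin P$, and in particular $\overline{pq}$ meets $\partial P$.

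Next we locate a point of $\overline{pq}$ that is outside $P$ and close to an endpoint. The key fact is that $\de{p}{q} \leq 1$, so every point of $\overline{pq}$ is within Euclidean distance $\frac12$ of $p$ or of $q$. In particular this holds for any boundary point $y \in \overline{pq} \cap \partial P$. Without loss of generality, $\de{p}{y} \leq \frac12$. Then the Euclidean distance from $p$ to $\partial P$ is at most $\frac12$.

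By Lemma~\ref{lma:covers_half}, this means $p$ is covered by $\Q_b$. That contradicts the hypothesis that $p$ is not covered by $\Q_b$. The symmetric case is handled in the same way using $q$. Hence $\overline{pq} \cap \partial P = \emptyset$.

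The only remaining gap is the degenerate possibility that the segment leaves $P$ without meeting $\partial P$. This cannot happen: a connected segment with both endpoints in $P$ that contains a point outside $P$ must cross $\partial P$. So the segment stays in $P$, and the claim follows. This step is routine. The whole argument is a short application of Lemma~\ref{lma:covers_half} via the midpoint bound, so I expect no real obstacle beyond noting that a point in $P$ at Euclidean distance at most $\frac12$ from $\partial P$ is exactly what that lemma covers.
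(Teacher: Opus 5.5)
Your proposal is correct and follows essentially the same argument as the paper: any point of $\overline{pq}\cap\partial P$ lies within Euclidean distance $\frac{1}{2}$ of $p$ or $q$, which contradicts Lemma~\ref{lma:covers_half}, so the segment lies in $P$ and $\dg{p}{q}=\de{p}{q}\le 1$. Your final paragraph on the ``degenerate possibility'' is redundant, since you already used the connectedness argument in your first paragraph.
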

\begin{proof}
    If $\overline{pq} \cap \partial P = \emptyset$, then $\spath{p}{q} = \overline{pq}$; so $\dg{p}{q} = \de{p}{q} \leq 1$ as required. Otherwise, there is a $r \in \overline{pq} \cap \partial P$; so $\de{r}{p}$ or $\de{r}{q}$ is at most $\frac{1}{2}$ (since $|\overline{pq}|\leq 1$), which, by Lemma~\ref{lma:covers_half}, is a contradiction to the assumption that $p$ and $q$ are uncovered by $\Q_b$.
\end{proof}

With this in mind, we proceed to cover the interior of $P$ with a greedy algorithm which we call \greedyinterioralgo. Consider the uncovered region $P' = P \setminus \cup_{Q \in \Q_b}Q$ and place a center at a point $c$ in the interior of $P'$. Remove $\ballg{c}$ from $P'$ and iterate until all points in $P'$ have been covered. Let $C_i$ denote the centers placed by \greedyinterioralgo in $P'$. Define $\Q_i = \{\ballg{c} \mid c \in C_i\}$ and let $\Q = \Q_b \cup \Q_i$.

\begin{algorithm}[ht]
\caption{\greedyinterioralgo}\label{alg:greedy_interior}
\begin{algorithmic}[1]
\State \textbf{Input:} A simple polygon $P$, an uncovered region $U$, and a radius $r$.
\State \textbf{Output:} A set of centers $C_i$. 

\State $C_i \gets \emptyset$
\State $R \gets U$

\While{$R \neq \emptyset$}\label{line:check_emptyset}
    \State Find a point $c$ in the interior of $R$ \label{line:find_interior_point}
    \State Append $c$ to $C_i$
    \State $R \gets R \setminus \ballgr[P]{r}{c}$ \label{line:maintain_overlay}
\EndWhile
\State \Return $C_i$
\end{algorithmic}
\end{algorithm}

\begin{observation}\label{obs:centers_are_far}
    For any two distinct centers $c$ and $c'$ in $C_i$, $\de{c}{c'} > 1$.
\end{observation}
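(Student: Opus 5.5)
Order the centers of $C_i$ by the iteration of \greedyinterioralgo in which they were placed (here the radius $r=1$ and $U = P'$). Let $c$ and $c'$ be two distinct centers with $c$ placed before $c'$. The goal is to show that $\de{c}{c'} \leq 1$ is impossible, and to get this by combining the greedy invariant with Corollary~\ref{cor:euclidean_reduction}.

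First I would record the invariant of the loop. When $c'$ is chosen, it lies in the current region $R$. This region is obtained from $P' = P \setminus \bigcup_{Q \in \Q_b} Q$ by removing $\ballg{c}$, and every other ball removed earlier. Hence $c'$ is not covered by $\Q_b$ and $\dg{c}{c'} > 1$. Also $c$ lies in the interior of the region $R$ at its own iteration, and that region is a subset of $P'$. So $c$ is not covered by $\Q_b$ either.

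Next I would apply Corollary~\ref{cor:euclidean_reduction} to the pair $c, c'$, since neither point is covered by $\Q_b$. If $\de{c}{c'} \leq 1$, the corollary gives $\dg{c}{c'} \leq 1$, which contradicts $\dg{c}{c'} > 1$. Therefore $\de{c}{c'} > 1$.

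The only step that needs care is checking the hypotheses of the corollary. Both centers are taken from regions contained in $P'$, so they avoid every piece of $\Q_b$, and both lie in $P$. I expect no real obstacle beyond this.
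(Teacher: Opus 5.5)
Your proof is correct and follows the paper's argument. The greedy invariant gives $\dg{c}{c'} > 1$, both centers lie in $P'$ and so are uncovered by $\Q_b$, and the contrapositive of Corollary~\ref{cor:euclidean_reduction} then yields $\de{c}{c'} > 1$; you just write out the hypothesis check that the paper leaves implicit.
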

\begin{proof}
    Since \greedyinterioralgo picks centers iteratively from the interior of the uncovered region, we have that $\dg{c}{c'} > 1$ for any two centers $c$ and $c'$ in $C_i$. The claim now follows from (the contrapositive of) Corollary~\ref{cor:euclidean_reduction}.
\end{proof}
\begin{observation}\label{obs:covers_poly}
     $\Q$ is a \wordGRCover of $P$.
\end{observation}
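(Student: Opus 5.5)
We need to check the two defining conditions of a \wordGRCover for $\Q = \Q_b \cup \Q_i$: every piece is a \GRpiece contained in $P$, and the union of the pieces is all of $P$. We also need to argue that \greedyinterioralgo terminates, since otherwise $C_i$, and hence $\Q_i$, is not well defined.

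\textbf{Each piece is a \GRpiece.} Every piece has the form $\ballg{c}$ with $c \in C_b \cup C_i \subseteq P$, and by definition it is a subset of $P$. Every point $q \in \ballg{c}$ satisfies $\dg{c}{q} \leq 1$. The geodesic radius of $\ballg{c}$ is measured inside $\ballg{c}$ itself, so we must check that $\spath{c}{q}$ stays inside $\ballg{c}$. Each prefix of a shortest path is a shortest path, so every point $x$ on $\spath{c}{q}$ satisfies $\dg{c}{x} \leq \dg{c}{q} \leq 1$. Hence $x \in \ballg{c}$, so $\ballg{c}$ is geodesically star-shaped from $c$, and its intrinsic geodesic radius is at most $1$.

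\textbf{Termination.} I would bound the number of iterations of \greedyinterioralgo with a packing argument based on Observation~\ref{obs:centers_are_far}. Any two centers in $C_i$ are at Euclidean distance greater than $1$, so the Euclidean disks $\balle[1/2]{c}$ for $c \in C_i$ are pairwise interior disjoint. Each such disk lies within the bounded region obtained by thickening $P$ by $\frac{1}{2}$, which has finite area. So $|C_i|$ is bounded by that area divided by $\pi/4$, and the loop stops after finitely many iterations.

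\textbf{Covering.} The loop stops only when the remaining region $R$ is empty, or when it has empty interior and so contains no interior point to choose. The main obstacle is this degenerate case, where a measure-zero leftover set could break the covering condition. I would rule it out by noting that each removed ball $\ballg{c}$ is closed, so the uncovered set $P \setminus \bigcup_{Q \in \Q} Q$ is relatively open in $P$. A nonempty relatively open subset of a polygon has nonempty interior, so the loop would not have stopped. Therefore $R = \emptyset$ at termination, every point of $P' = P \setminus \bigcup_{Q \in \Q_b} Q$ is covered by $\Q_i$, and every other point of $P$ is covered by $\Q_b$. Hence $\bigcup_{Q \in \Q} Q = P$, and $\Q$ is a \wordGRCover of $P$.
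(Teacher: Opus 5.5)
Your proof is correct and follows the same route as the paper, whose entire argument is that any point not covered by $\Q_b$ lies in $P'$, and \greedyinterioralgo keeps running until $P'$ is exhausted. Your additional steps are also correct and make explicit what the paper takes for granted: checking that each $\ballg{c}$ is a \GRpiece via geodesic star-shapedness, bounding the number of iterations by a packing argument based on Observation~\ref{obs:centers_are_far}, and using closedness of the balls to show that a nonempty remaining region always contains an interior point.
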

\begin{proof}
    Consider a point $p \in P$ which is not covered by $\Q_b$ --- i.e, $p \in P'$. Since \greedyinterioralgo iterates until all points in $P'$ are covered, $p$ is covered by $\Q'_i$.
\end{proof}
To achieve the approximation ratio promised in Theorem~\ref{thm:GR_approx}, we prove that $|\Q| \leq 9 \cdot \opt$. From Corollary~\ref{cor:boundary_pieces}, we already have $|\Q_b| \leq 4 \cdot \opt$, so it remains to show the following:

\begin{lemma}\label{lma:interior_approx}
    $|\Q_i| \leq 5 \cdot \opt$.
\end{lemma}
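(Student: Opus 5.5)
We argue by charging the interior centers to the pieces of an optimal cover. Fix an optimal \wordGRCover of $P$ with centers $c^*_1,\dots,c^*_{\opt}$, so every point of $P$ lies within geodesic distance $1$ of some $c^*_j$. Every center $c\in C_i$ lies in $P'\subseteq P$, so we may assign $c$ to some optimal center $c^*_{j(c)}$ with $\dg{c}{c^*_{j(c)}}\le 1$. It then suffices to show that at most $5$ centers of $C_i$ are assigned to any single $c^*_j$. Since $\de{\cdot}{\cdot}\le\dg{\cdot}{\cdot}$, all centers assigned to $c^*_j$ lie in the Euclidean unit disk $\balle{c^*_j}$, and by Observation~\ref{obs:centers_are_far} they are pairwise at Euclidean distance strictly greater than $1$.

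The key step is the packing bound: a set of points in the closed unit disk $\balle{o}$ with pairwise distances strictly greater than $1$ has at most $5$ elements. We would prove it as follows.
\begin{enumerate}
\item At most one point can equal $o$. If one does, every other point is at distance $>1$ from $o$, which is impossible in $\balle{o}$. Hence either there is exactly one point in total, or all points differ from $o$.
\item Suppose all points differ from $o$, and consider two of them, $a$ and $b$, with angle $\theta$ at $o$. Since $\de{o}{a},\de{o}{b}\le 1$, the law of cosines gives $\de{a}{b}^2=|oa|^2+|ob|^2-2|oa||ob|\cos\theta$.
\item If $\theta\le\pi/3$, then $\de{a}{b}\le\max(|oa|,|ob|)\le 1$. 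This holds because the side opposite an angle of at most $60^\circ$ is not the strictly longest side of the triangle. That contradicts the separation, so every angular gap between consecutive points around $o$ exceeds $\pi/3$.
\item The angular gaps sum to $2\pi$, and each exceeds $\pi/3$, so there are at most $5$ points.
\end{enumerate}
The strict inequality in Observation~\ref{obs:centers_are_far} is exactly what rules out the sixth point of a hexagonal arrangement.

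Summing over the $\opt$ optimal centers gives $|\Q_i|=|C_i|\le 5\cdot\opt$. The step needing the most care is the claim that an angle of at most $\pi/3$ forces $\de{a}{b}\le 1$. We would handle it directly: if $\de{a}{b}$ were the strictly largest side, then its opposite angle $\theta$ would be the strictly largest angle of the triangle, hence $\theta>\pi/3$, a contradiction. Therefore $\de{a}{b}\le\max(|oa|,|ob|)\le 1$. The rest of the argument follows directly from Observation~\ref{obs:centers_are_far} and the fact that geodesic distance dominates Euclidean distance.
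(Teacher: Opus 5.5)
Your proof is correct and follows essentially the same route as the paper: you charge each center of $C_i$ to an optimal piece that covers it, note that this piece lies in the Euclidean unit disk about its center, and apply the packing bound of Corollary~\ref{coro:ball_contains_5} using the strict separation from Observation~\ref{obs:centers_are_far}. Your self-contained proof of the packing bound, including the case of a point at the disk's center that the paper's sextant argument glosses over, is a more careful version of the paper's Observation~\ref{obs:sector_contains_1} and Corollary~\ref{coro:ball_contains_5}.
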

We first establish two results about Euclidean disks that we will use to prove the above lemma. For a point $c \in \reals^2$ and angles $0 \leq \alpha < \beta \leq 2\pi$, let $\sector{\alpha}{\beta}$ denote the \textit{sector} of $\balle{c}$ induced by the two radii that make angles $\alpha$ and $\beta$ with the $x$--axis. Note that a sector is a closed set; in particular, it contains the two defining radii. A \textit{sextant} $\sextant{\alpha}$ is the sector $\sector{\alpha}{\alpha + \frac{\pi}{3}}$. See Figure~\ref{fig:sector}, \textit{Left}.
\begin{figure}[ht]
    \centering
    \includegraphics[width=0.95\linewidth,page=6]{figures_prahlad.pdf}
    \caption{Left: The sextant $\sextant{\frac{\pi}{6}}$ is in red. Two points inside a sextant are at most a distance 1 apart since they subtend an angle of at most $\frac{\pi}{3}$ between them at the center. Right: The distance between 5 equally spaced points along a unit circle is at least $1.17$.}
    \label{fig:sector}
\end{figure}

\begin{observation}\label{obs:sector_contains_1}
    For any two points $p$ and $q$ in an arbitrary sextant $\sextant{\alpha}$, $\de{p}{q} \leq 1$. 
\end{observation}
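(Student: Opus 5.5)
The plan is to put both points in polar coordinates around the center $c$ of the sextant and bound their Euclidean distance with the law of cosines. Write $p = c + r_p(\cos\theta_p, \sin\theta_p)$ and $q = c + r_q(\cos\theta_q, \sin\theta_q)$. Membership in $\sextant{\alpha}$ gives $0 \leq r_p, r_q \leq 1$ and $\theta_p, \theta_q \in [\alpha, \alpha + \frac{\pi}{3}]$. Hence the angle $\gamma$ that $p$ and $q$ subtend at $c$ satisfies $0 \leq \gamma \leq \frac{\pi}{3}$, so $\cos\gamma \geq \frac{1}{2}$.

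Next apply the law of cosines:
\[
\de{p}{q}^2 = r_p^2 + r_q^2 - 2 r_p r_q \cos\gamma \leq r_p^2 + r_q^2 - r_p r_q .
\]
This step uses $r_p r_q \geq 0$ together with $\cos\gamma \geq \frac{1}{2}$.

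It then remains to show that $f(x,y) = x^2 + y^2 - xy \leq 1$ on $[0,1]^2$. By symmetry we may assume $x \geq y$. Then $f(x,y) = x^2 - y(x - y) \leq x^2 \leq 1$, because $y \geq 0$ and $x - y \geq 0$. Combining this with the previous display gives $\de{p}{q} \leq 1$.

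Degenerate cases, where $p$ or $q$ equals $c$ and the angle is undefined, need no separate treatment: if one radius is $0$, the bound reads $\de{p}{q} = r \leq 1$. There is no real obstacle in this argument. The only point needing care is checking that the closedness of the sector still keeps the angle at most $\frac{\pi}{3}$, which it does.
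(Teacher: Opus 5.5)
Your proof is correct and takes the same approach as the paper, which simply notes that both points lie within distance $1$ of $c$ and subtend an angle of at most $\frac{\pi}{3}$ there. Your law-of-cosines computation and the bound $x^2+y^2-xy \leq \max(x,y)^2 \leq 1$ on $[0,1]^2$ spell out the step the paper leaves to the reader.
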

\begin{proof}
    Since $\de{c}{p}$ and $\de{c}{q}$ are at most 1 and the angle between $\overline{cp}$ and $\overline{cq}$ is at most $\frac{\pi}{3}$, our claim follows. See Figure~\ref{fig:sector}, \textit{Left}.
\end{proof}
\begin{corollary}\label{coro:ball_contains_5}
    There \textbf{does not} exist a subset $X \subseteq \balle{c}$ with $|X| \geq 6$ such that $\de{p}{q} > 1$ for any two points $p$ and $q$ in $X$. 
\end{corollary}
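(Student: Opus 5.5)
We will prove the corollary by a pigeonhole argument over sextants, using Observation~\ref{obs:sector_contains_1}. Suppose, for contradiction, that some $X \subseteq \balle{c}$ with $|X| \geq 6$ has $\de{p}{q} > 1$ for all distinct $p,q \in X$. The first step is to observe that at most one point of $X$ can coincide with the center $c$. Moreover, if $c \in X$, then every other point $x \in X$ satisfies $\de{c}{x} > 1$, which contradicts $x \in \balle{c}$. So we may assume $c \notin X$, and every point of $X$ then has a well-defined angle $\theta(x) \in [0,2\pi)$ about $c$.

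The key step is to partition the full circle of directions into six closed sextants $\sextant{0}, \sextant{\pi/3}, \dots, \sextant{5\pi/3}$. Their union is all of $\balle{c}$, but the pigeonhole principle alone only guarantees two of the (at least six) points in a common sextant when $|X| \geq 7$. For exactly six points we need the freedom to rotate the sextants. We sort the points of $X$ by angle as $\theta_1 \le \theta_2 \le \dots \le \theta_m$ with $m \ge 6$, and look at the $m$ cyclic angular gaps $\theta_{i+1} - \theta_i$, taken modulo $2\pi$. These gaps sum to $2\pi$, so some gap is at most $2\pi/m \le \pi/3$. Hence there are two consecutive points $p,q$ whose angular separation at $c$ is at most $\pi/3$. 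Both then lie in the sextant $\sextant{\theta(p)}$, and Observation~\ref{obs:sector_contains_1} gives $\de{p}{q} \le 1$, which is the desired contradiction.

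The only delicate point is this boundary case of exactly six points, where a fixed choice of sextants is not enough and the rotation or averaging argument above is required. If one prefers to avoid rotating, the same conclusion follows directly from the law of cosines. For an angle $\phi \le \pi/3$ and radii $a,b \le 1$,
\[
\de{p}{q}^2 = a^2 + b^2 - 2ab\cos\phi \le a^2 + b^2 - ab \le \max(a,b)^2 \le 1 .
\]
This is exactly the content of Observation~\ref{obs:sector_contains_1}, so we can simply invoke that observation.
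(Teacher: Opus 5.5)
Your proof is correct and takes the same route as the paper: with at least six points, two of them subtend an angle of at most $\frac{\pi}{3}$ at $c$, and Observation~\ref{obs:sector_contains_1} then bounds their distance by $1$. You also spell out two steps the paper leaves implicit, namely ruling out $c \in X$ (where the angle is undefined) and justifying the angular pigeonhole by noting that the $m$ cyclic gaps sum to $2\pi$.
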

\begin{proof}
    Assume that there exists such a set $X$ with $|X| \geq 6$. Then, there exists two points $p,q \in X$ which subtend an angle of at most $\frac{\pi}{3}$ between them at $c$. This implies that $\de{p}{q} \leq 1$ from Observation~\ref{obs:sector_contains_1}, a contradiction. 
\end{proof}
Note that Corollary~\ref{coro:ball_contains_5} is tight --- Let $X$ denote the set of 5 equally spaced points along the unit circle; the distance between any two points in $X$ is at least $1.17$. See Figure~\ref{fig:sector}, \textit{Right}.
We are now ready to prove Lemma~\ref{lma:interior_approx}.
\begin{proof}[Proof of Lemma~\ref{lma:interior_approx}]
    Consider an optimal \wordGRCover $\Q^*$ of $P$ and a \GRpiece $Q \in \Q^*$. Let $c$ be the center of $Q$; we have $Q = \ballg{c} \subseteq \balle{c}$. Let $X = C_i \cap \balle{c}$ --- Note that $X \supseteq C_i \cap Q$. Moreover, every distinct pair of points in $X$ are more than a distance 1 apart (Observation~\ref{obs:centers_are_far}). From Corollary~\ref{coro:ball_contains_5}, we have that $|X| \leq 5$. Since every point in $C_i$ is covered by some \GRpiece in $\Q^*$, we have $|C_i| \leq 5 \cdot |\Q^*|$; the result now follows.    
\end{proof}
\subsection{Runtime analysis}\label{sec:Gr_runtime}

We now show that the three subroutines used by our algorithm \GRgreedypolygonalgo to construct $\Q$ --- namely, \GRgreedyboundaryalgo, \GRextracentersalgo, and \greedyinterioralgo{} --- can be executed in polynomial time.

\begin{algorithm}[ht]
\caption{\GRgreedypolygonalgo}\label{alg:greedy_full}
\begin{algorithmic}[1]
\State \textbf{Input:} A simple polygon $P$ stored as a circular list.
\State \textbf{Output:} A \wordGRCover $\Q$ of $P$. 

\State $(C'_b, \I) \gets \GRgreedyboundaryalgo(P)$\label{line:run_boundary}
\State $C_b \gets \GRextracentersalgo(C'_b, \I)$\label{line:run_extra_centers}
\State $\Q_b \gets \{\ballg{c} \mid c \in C_b\}$ \label{line:compute_Qb}
\State $P' \gets P \setminus \cup_{Q \in \Q_b}Q$ \label{line:compute_P'}
\State $C_i \gets \greedyinterioralgo(P, P',1)$
\State $\Q_i \gets \{\ballg{c} \mid c \in C_i\}$\label{line:interior_pieces}
\State $\Q = \Q_b \cup \Q_i$\label{line:total_pieces}

\State \Return $\Q$
\end{algorithmic}
\end{algorithm}
The first two steps of \GRgreedypolygonalgo is to run \GRgreedyboundaryalgo and \GRextracentersalgo (Lines~\ref{line:run_boundary} and \ref{line:run_extra_centers}). 
\GRgreedyboundaryalgo was shown to run in polynomial time by Rabanca and Vigan in \cite{rabanca_covering_2015}. 
%A slight tweak in \GRgreedyboundaryalgo leads to the faster runtime given in \cite{rabanca_covering_2015}, but we state the simpler version here since the two remaining subroutines dominate the overall runtime anyway.

\begin{lemma}[Section 2.2 in \cite{rabanca_covering_2015}]\label{lma:constructing_boundary_pieces}
    Given a polygon $P$, \GRgreedyboundaryalgo returns $C'_b$ and $\I$ in $\ordertilde{|C'_b| + n}$ time.
\end{lemma}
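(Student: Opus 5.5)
This statement is cited from Rabanca and Vigan, so what is needed is a sketch of why the greedy walk can be carried out in $\ordertilde{|C'_b| + n}$ time.

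\textbf{Preprocessing.} First I build the standard shortest-path machinery for $P$ in $\order{n}$ time. This consists of a triangulation of $P$ together with a two-point shortest-path query structure (e.g.\ Guibas--Hershberger). It returns $\dg{a}{b}$ in $\order{\log n}$ time and an implicit representation of $\spath{a}{b}$.

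\textbf{Reformulating one greedy step.} With a start point $s$ on $\partial P$ fixed, the greedy step asks for the farthest $t$ counterclockwise such that $\partial P[s,t]$ lies in some unit geodesic disk. For a boundary interval, the smallest enclosing geodesic disk has its geodesic radius determined by the interval's geodesic $1$-center. In a simple polygon, the farthest point from any $c$ within a contiguous boundary interval is attained at an endpoint or a vertex. Hence the test ``$\partial P[s,t]$ fits in a unit geodesic disk'' reduces to computing the geodesic $1$-center of a set of points: $s$, $t$, and the vertices of $P$ between them.

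\textbf{Advancing the walk and amortizing.} I advance $t$ vertex by vertex, maintaining this geodesic center incrementally. Once the interval with the next vertex appended no longer fits, I solve for the exact stopping point on the current edge; this is a one-dimensional search on a monotone predicate, done by parametric search or binary search. The key amortization is that each greedy step scans only vertices in its own interval, plus $\order{1}$ extra. Summed over all steps, the scanning costs $\order{n + |C'_b|}$ vertex visits, each polylogarithmic, giving $\ordertilde{n + |C'_b|}$ overall.

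\textbf{Main obstacle.} The hard part is maintaining the geodesic $1$-center of a growing contiguous boundary chain in polylogarithmic amortized time. A naive recomputation per vertex costs linear time, which gives only $\ordertilde{n^2}$. Here I would rely on the known $\order{n}$ geodesic-center machinery of Ahn et al., restricted to subchains. I would combine it with an exponential-then-binary search on the interval endpoint, so that each greedy step costs $\ordertilde{m_j}$, where $m_j$ is the number of vertices in that step's interval. Summing over steps gives the bound.
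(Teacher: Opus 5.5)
The paper gives no argument of its own here; it imports the bound from Rabanca and Vigan. Your sketch therefore has to stand on its own, and it breaks at exactly the step you flag as the main obstacle. Your reduction is sound: by convexity of $\dg{c}{\cdot}$ along segments, $\partial P[s,t]$ fits in a unit geodesic disk iff the finite set $\{s,t\}\cup\{\text{chain vertices}\}$ does, and this predicate is monotone in $t$.

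The trouble is that ``Ahn et al.\ restricted to subchains'' does not give $\ordertilde{m_j}$ per step. That algorithm computes the center of a simple polygon in its \emph{own} intrinsic metric. Distances between points of the chain, however, are measured in $P$ and depend on boundary outside the chain. The polygon you would actually have to feed it is the relative convex hull $R$ bounded by $\partial P[s,t]$ and $\spath{t}{s}$. This $R$ is relatively convex, so its intrinsic distances agree with $d_P$. Since $\dg{c}{\cdot}$ is convex along $\spath{s}{t}$, the center of $R$ is the chain's center. But $\spath{s}{t}$ may turn at arbitrarily many vertices of the complementary chain $\partial P[t,s]$. For example, take a chain with $\order{1}$ vertices wrapping at distance about $0.1$ around a finely subdivided bulge of $\partial P[t,s]$ into $P$ of radius $0.3$; its geodesic radius is below $1$. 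Then $|R|=\Theta(n)$ while $m_j=\order{1}$. Hence a single step, or even a single probe of your doubling search, can cost $\Theta(n)$, and summing $\ordertilde{m_j}$ over steps proves nothing.

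What is missing is a global charging argument for these ``foreign'' path vertices. For the final greedy intervals one is available. Hulls of interior-disjoint boundary intervals are interior-disjoint, because each lies in the relatively convex region on the far side of the other's shortest path. Moreover, if $v\notin I_j$ is a turning vertex of $\spath{s_j}{t_j}$, then $R_j$ contains a wedge of angle greater than $\pi$ at $v$. So $v$ is foreign to at most one final hull, and the total hull size is $\order{n+|C'_b|}$. Your overshooting probes $\partial P[s_j,u]$, however, extend into later intervals and overlap one another. You still need a separate bound on how often a vertex can be foreign to them, for instance a bound on the ply of the non-trivial probe chains.

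Two smaller points. First, the exact stopping point on the last edge cannot be found by binary search over a real parameter. Exact maximality of each greedy interval is precisely what Observation~\ref{obs:next_one_is_different} relies on. You must therefore determine the combinatorial structure of the center as $t$ moves along that edge and solve for the breakpoint, or set up the parametric search concretely. Second, the amortization in your third paragraph (polylogarithmic cost per vertex, $\order{1}$ extra vertices per step) rests on an incremental maintenance you never supply. The doubling search you switch to supersedes it anyway.
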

%We note that Lines~\ref{line:finding maximal walk} and~\ref{line:finding next vertex} in \GRgreedyboundaryalgo require some care in both their implementation and analysis. 
%In contrast, the running time of \GRextracentersalgo is immediate.
The running time of \GRextracentersalgo is immediate.
\begin{lemma}\label{lma:constructing_extra_centers}
    Given a set of centers $C'_b$ and a patition $\I$ of $\partial P$, \GRextracentersalgo returns $C_b$ in $\order{|\I| + |C_b|}$ time.
\end{lemma}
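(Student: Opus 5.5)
The plan is to read off the cost directly from Algorithm~\ref{alg:extra_centers}, since \GRextracentersalgo performs only elementary operations. I will represent $C'_b$ and $C_b$ as dynamic arrays or linked lists, so that appending a single point costs $\order{1}$ time. I will also assume that $\I$ is stored as a list of intervals, each recorded by its endpoints $(s_I, t_I)$, so that $I[0]$ is available in $\order{1}$ time. This is exactly the representation returned by \GRgreedyboundaryalgo in Lemma~\ref{lma:constructing_boundary_pieces}.

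The key steps are as follows.
\begin{enumerate}
\item The initialization $C_b \gets C'_b$ costs $\order{|C'_b|}$ if we copy the list, or $\order{1}$ if we reuse it in place. Either way it is at most $\order{|C_b|}$, because $C'_b \subseteq C_b$.
\item The loop makes exactly $|\I|$ iterations. Each iteration reads $I[0]$ and appends it to $C_b$ in $\order{1}$ time, so the loop costs $\order{|\I|}$ in total.
\item Adding these two bounds gives $\order{|\I| + |C_b|}$.
\end{enumerate}
As a consistency check, $|C_b| = |C'_b| + |\I| = 2|\I|$, so the bound is in fact $\order{|\I|}$.

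I do not expect a real obstacle here. The only point needing care is fixing the representation of $\I$, so that extracting an interval's endpoint is constant time and does not require walking along $\partial P$. This is guaranteed because \GRgreedyboundaryalgo constructs each interval explicitly by its start and end points.
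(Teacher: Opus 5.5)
Your proposal is correct and matches the paper, which gives no separate argument beyond calling the running time immediate from Algorithm~\ref{alg:extra_centers}. Your accounting (initializing $C_b$ from $C'_b$ in $\order{|C_b|}$ time plus $|\I|$ constant-time appends) is exactly that immediate argument, and your observation that $|C_b| = 2|\I|$ agrees with the paper.
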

Next, we compute $P'$ using $\Q'_b$ (Lines~\ref{line:compute_Qb} and \ref{line:compute_P'} in \GRgreedypolygonalgo). $\Q_b$ can be constructed in $\order{|C_b| \cdot n^2}$ using an algorithm by Borgelt, van Krevald, and Luo \cite{borgelt_geodesic_2011}. Since a geodesic ball is described by $\order{n}$ circular arcs and line segments, the arrangement of $P$ and $\Q_b$ has complexity $\order{|C_b|^2 \cdot n^2}$; and can be computed in that time \cite{halperin_arrangements_2004}. Thus, Line~\ref{line:compute_P'} takes $\order{|C_b|^2 \cdot n^2}$ time.

\begin{lemma}\label{lma:constructing_interior_pieces}
    Given an uncovered region $P' \subset P$ (as an overlay of $|C_b|$ geodesic disks and $P$), \greedyinterioralgo returns $C_i$ in $\ordertilde{|C_i| \cdot (|C_b|+|C_i|)^2 \cdot n^2}$ time. 
\end{lemma}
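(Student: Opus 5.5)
We will bound the cost of each iteration of the while loop in \greedyinterioralgo and multiply by the number of iterations, which is exactly $|C_i|$, since each iteration appends one center. The plan is to keep the remaining region $R$ explicitly as the arrangement (overlay) of $P$ together with all geodesic unit disks placed so far. At the start of iteration $j$ this means the $|C_b|$ boundary disks plus $j-1$ interior disks, so at most $m := |C_b|+|C_i|$ disks. Each face of the overlay is labelled \emph{covered} or \emph{uncovered*}, and $R$ is the union of the uncovered faces.

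The first step is to bound the size of this overlay. Each geodesic disk $\ballg{c}$ in a simple polygon has boundary made of $\order{n}$ circular arcs and segments, and can be computed in $\order{n^2}$ time (Borgelt, van Kreveld and Luo, as already used above). Two such boundaries intersect in $\order{n^2}$ points, since each pair of pieces meets in $\order{1}$ points. Hence the arrangement of $P$ and $m$ disks has complexity $\order{m^2 n^2}$, matching the bound already used for computing $P'$.

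Each iteration then proceeds in the following order.
\begin{enumerate}
\item Test whether $R=\emptyset$ (Line~\ref{line:check_emptyset}) by checking whether any face of the overlay is labelled uncovered.
\item If some face is uncovered, pick an interior point $c$ of it (Line~\ref{line:find_interior_point}), for example a point slightly inside an edge, or the centroid of a trapezoidal piece of that face. This takes time linear in the face's complexity.
\item Compute $\ballg{c}$ in $\order{n^2}$ time.
\item Update the overlay by inserting the new disk boundary (Line~\ref{line:maintain_overlay}) and relabel as covered every face lying inside $\ballg{c}$.
\end{enumerate}
The simplest rigorous choice for step 4 is to rebuild the whole arrangement from scratch with a standard sweep for curved arrangements (Halperin). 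This costs $\order{(m^2n^2 + mn)\log(mn)} = \ordertilde{m^2n^2}$ per iteration. Labels are obtained by walking the arrangement and flipping the inside/outside status of a disk whenever we cross that disk's boundary, which is linear in the arrangement size.

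Summing over $|C_i|$ iterations gives $\ordertilde{|C_i|\cdot(|C_b|+|C_i|)^2\cdot n^2}$, as claimed.

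I expect the main obstacle to be the complexity bound on the overlay, in particular justifying that two geodesic disk boundaries meet in only $\order{n^2}$ points and that the combined arrangement with $P$ stays $\order{m^2n^2}$. I would first try the crude argument: each boundary consists of $\order{n}$ constant-degree arcs, so counting pairwise intersections of pieces suffices and no geodesic structure is needed. A secondary concern is degeneracy: faces may be lower-dimensional, and $R$ need not be closed. I would handle this by choosing $c$ only in two-dimensional uncovered faces, and by declaring $R$ empty when no such face remains.
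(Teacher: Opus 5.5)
Your proposal is correct and gives the stated bound, but it is organised differently from the paper's proof.

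The paper maintains the overlay \emph{incrementally}, using an incremental arrangement construction. Updating $R$ then costs $\ordertilde{(|C_b|+|C_i|)^2 n^2}$ over the whole run, and the emptiness test is constant time. The per-iteration cost is dominated by finding the interior point, which the paper does globally:
\begin{itemize}
\item it computes the minimum enclosing Euclidean ball $B$ of all vertices of $R$, in time linear in the $\order{(|C_b|+|C_i|)^2 n^2}$ complexity of $R$;
\item it takes a vertex $v$ of $R$ on $\partial B$;
\item it lets $c$ be the midpoint of an arc in which the circle of radius $r$ about $v$ meets the interior of $R$.
\end{itemize}

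You instead rebuild the arrangement from scratch by a static sweep in every round, and pick $c$ inside a single uncovered two-dimensional face. Your route is more self-contained: a static sweep plus a linear-time labelling traversal suffice, with no appeal to incremental-construction bounds. It also loses nothing, because the claimed bound already charges $\ordertilde{(|C_b|+|C_i|)^2 n^2}$ to each of the $|C_i|$ rounds. The paper's amortised maintenance, by contrast, makes the point search the only per-round cost that is linear in the overlay, which is the natural place to attack the $|C_i|$ factor.

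Your face-local choice of $c$ is also the more robust one. When $R$ is disconnected (e.g.\ two tiny, far-apart uncovered pockets), the circle about $v$ need not meet $R$ at all. So the paper's construction really has to be applied to a single component.

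Two small points. First, if you use a vertical decomposition, take the midpoint of the vertical chord at the middle abscissa of a pseudo-trapezoid, not its centroid. The centroid of a thin curved crescent between two congruent, vertically offset circular arcs lies outside the crescent. Second, your degeneracy worry is moot. In the radius setting, $R$ is $P$ minus finitely many closed geodesic balls, hence relatively open in $P$. So $R\neq\emptyset$ already forces some two-dimensional face of the overlay to be uncovered, and your modified stopping rule coincides with the original one.
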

\begin{proof}
    We can maintain $R$, the uncovered region through the course of the algorithm (Line~\ref{line:maintain_overlay} in \greedyinterioralgo) in $\ordertilde{(|C_b|+|C_i|)^2 \cdot n^2}$ time using an incremental construction of the arrangement. See, for example, \cite[Theorem 28.4.2]{halperin_arrangements_2004} and \cite[Section 6]{sack_chapter_2000}. Maintaining this arrangement also allows for us to check if $R$ is empty (Line~\ref{line:check_emptyset}) in constant time in each iteration of the loop. 

    To find a point $c$ in the interior of $R$ (Line~\ref{line:find_interior_point}), we find the minimum enclosing (Euclidean) ball $B$ of the vertices of $R$. Let $v$ be one of the vertices of $R$ on $\partial B$ and let $r$ be the radius of $B$. Consider the ball $B'$ of radius $r$ around $v$; $\partial B'$ intersects the interior of $R$ in (possibly) disconnected arcs. Let $c$ be the midpoint of one such arc\footnote{To bypass bit complexity issues, we assume the standard Real-RAM model.}. Finding $B$ takes linear (in the complexity of $R$) time; see, for example, \cite[Chapter 4.7]{de_berg_computational_2008}. Since this complexity is bounded by $\order{(|C_b|+|C_i|)^2 \cdot n^2}$ and there are $|C_i|$ iterations of the loop, the runtime follows.  
\end{proof}
We now prove the main result of this section, Theorem~\ref{thm:GR_approx} (repeated here for readability):

\grapprox*
\begin{proof}
    Consider the \wordGRCover $\Q$ that \GRgreedypolygonalgo returns. From Corollary~\ref{cor:boundary_pieces} and Lemma~\ref{lma:interior_approx}, we have that $|\Q| \leq 9 \cdot \opt$. From Lemmas~\ref{lma:constructing_boundary_pieces}, \ref{lma:constructing_extra_centers} and \ref{lma:constructing_interior_pieces}, it is clear that its running time is dominated by the subroutine \greedyinterioralgo. The runtime guarantee follows since $|C_i| + |C_b| = |\Q| \leq 9 \cdot \opt$.

    To return a \wordGRPartition instead of a \wordGRCover, we modify Lines~\ref{line:interior_pieces} and \ref{line:total_pieces} of \GRgreedypolygonalgo slightly: Compute the geodesic Voronoi diagram of $P$ for the centers $C_b \cup C_i$ and return as $\Q$ the Voronoi cells instead (see Remark~\ref{rmk:partition_from_centers}). 
    By Theorem~\ref{thm:coverispartition}, we have that $|\Q| \leq 9 \cdot \opt$; the running time is unaffected since the geodesic Voronoi diagram can be computed in $\order{n + \opt \log{\opt}}$ time using \cite[Theorem 8.1]{oh_optimal_2019}.
\end{proof}

\section{Approximation Algorithm for Small GD-Partition}\label{sec:diameter}

Our goal in this section is to prove Theorem~\ref{thm:GD_approx}. Let $\opt$ denote the number of pieces in an optimal \wordGDPartition of the input simple polygon $P$. Structurally, the algorithm we describe is similar to that in Section~\ref{sec:radius}. We first partition the boundary $\partial P$ into $2 \cdot \opt$ pieces using an algorithm of Abrahamsen and Rasmussen~\cite{abrahamsen_partitioning_2022} (Lemma~\ref{lma:GD_boundary_cover}). We then place at most $4 \cdot \opt$ additional pieces to cover points ``close'' to $\partial P$, and finally cover the interior using \greedyinterioralgo.
In contrast to Section~\ref{sec:radius}, where the interior can be covered using at most $5 \cdot \opt$ pieces, we obtain a bound of $9 \cdot \opt$ pieces here (Lemma~\ref{lma:interior_approx_part}). We conclude in Section~\ref{sec:Gd_runtime} that the running time of the algorithm is once again dominated by \greedyinterioralgo, thereby establishing Theorem~\ref{thm:GD_approx}.

\subsection{Algorithm description}
We begin by describing \GDgreedyboundaryalgo, the algorithm designed by Abrahamsen and Rasmussen~\cite[Sections 3.1 and 3.2]{abrahamsen_partitioning_2022} to cover the boundary of a simple polygon $P$ with \GDpiece{}s.
It is nearly identical to \GRgreedyboundaryalgo that we saw in the previous section: Start from a vertex $v$ of $P$ and cover as far along $\partial P$ (walking counterclockwise) as possible such that the traversed contiguous interval $I$ has geodesic diameter at most 1. 
From $I = \partial P[s_I, t_I]$, construct $\spath{t_I}{s_I}$ to obtain the \GDpiece bounded by $\partial P[s_I, t_I]$ and $\spath{t_I}{s_I}$. Repeat until returning to $v$.

Let $\Q'_b$ and $\I$ denote the collection of \GDpiece{}s and the partition of $\partial P$ the above algorithm returns respectively.
\begin{lemma}[Lemma 4 of \cite{abrahamsen_partitioning_2022}]\label{lma:GD_boundary_cover}
    $|\Q'_b| = |\I| \leq 2 \cdot \opt - 1$.
\end{lemma}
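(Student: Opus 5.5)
The plan is to mirror the proof of Lemma~\ref{lma:boundary_partition_greedy}, replacing geodesic-radius intervals by geodesic-diameter intervals. The equality $|\Q'_b| = |\I|$ is by construction, since each interval $I = \partial P[s_I,t_I]$ produces exactly one \GDpiece bounded by $I$ and $\spath{t_I}{s_I}$. So the content is the bound $|\I| \leq 2\cdot\opt - 1$.

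First I fix an optimal \wordGDPartition $\S$ of $P$ with $|\S| = \opt$. Each piece of diameter at most one is connected, or can be split into connected components without changing the count argument (if needed I take components, and note that components of a GD-piece are still GD-pieces). Lemma~\ref{lma:boundary_partition_nongreedy} then gives a partition $\I^*$ of $\partial P$ into contiguous intervals. Each interval lies in a single piece of $\S_b$, and $|\I^*| \leq \max(1, 2|\S_b| - 2)$. A subset of a piece has geodesic diameter at most one, measured in $P$, so every interval of $\I^*$ is a valid candidate for the greedy walk. If $|\I^*| = 1$, all of $\partial P$ has geodesic diameter at most one, so the greedy walk produces one interval, which is at most $2\cdot\opt - 1$. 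Otherwise $|\I^*| \leq 2\cdot\opt - 2$.

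Next I repeat the exchange argument. Let $s'_1,\dots,s'_k$ be the start points of the greedy intervals. For $j \leq k-1$, let $J = \partial P[s_J,t_J] \in \I^*$ contain $s'_j$. Then $s'_{j+1} \notin \partial P[s_J,t_J)$, since otherwise $\partial P[s'_j, t_J]$ would be a longer walk from $s'_j$ with diameter at most one, because it is a subset of $J$. This contradicts maximality. The wrap-around argument of Corollary~\ref{coro:different_intervals} then shows that no interval of $\I^*$ contains two of $s'_1,\dots,s'_{k-1}$. Otherwise a strict sub-collection of the greedy intervals would already cover $\partial P$, contradicting the fact that the greedy walk stops only when it returns to $v$. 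Hence $k - 1 \leq |\I^*| \leq 2\cdot\opt - 2$.

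The main obstacle is the subtle point about greediness. Maximality must be checked against subsets of a diameter-one interval. Diameter at most one is monotone under taking subsets, and distances are measured in $P$, not in the piece. Both of these make the step go through, and they make the argument slightly cleaner than in the radius case.
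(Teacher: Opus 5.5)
Your proof is correct and takes essentially the same route as the paper. The paper only cites Lemma~4 of Abrahamsen and Rasmussen, remarking that the diameter greedy is nearly identical to \GRgreedyboundaryalgo. Your argument is the paper's proof of Lemma~\ref{lma:boundary_partition_greedy} transplanted: you apply Lemma~\ref{lma:boundary_partition_nongreedy} to an optimal \wordGDPartition, and the inequality $d_P \le d_S$ on subsets of a piece gives the maximality step. One small fix: drop the parenthetical about splitting disconnected pieces into components, since that would increase $|\S_b|$ and spoil the bound. It is not needed anyway, because \GDpiece{}s are polygons and hence connected.
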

Consider $P'$, the region of $P$ uncovered by $\Q'_b$. Note that $P'$ is a (collection of) degenerate weakly simple polygon(s). since $\I \in \order{\opt}$, it follows that $|P'| \in \order{\opt \cdot n}$.
\begin{corollary}\label{coro:perimeter_small}
    The perimeter of $P'$ is at most $2 \cdot \opt -1$.
\end{corollary}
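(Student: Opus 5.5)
The plan is to bound the boundary of $P'$ piece by piece, using the structure of $\Q'_b$ given by Lemma~\ref{lma:GD_boundary_cover}. Since the pieces of $\Q'_b$ together cover all of $\partial P$, no point of $\partial P$ survives into $P'$ except as part of a boundary shared with some piece. I will argue that $\partial P' \subseteq \bigcup_{I \in \I} \spath{t_I}{s_I}$.

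First I establish this containment. Each piece $Q_I \in \Q'_b$ is the region bounded by $\partial P[s_I,t_I]$ and $\spath{t_I}{s_I}$. Take a point $x \in \partial P'$. Either $x \in \partial P$ or $x$ lies on the boundary of some $Q_I$. If $x \in \partial P$, then $x$ lies in some $I$, and hence in the closed piece $Q_I$. By the Jordan curve theorem, a point of $\partial P$ on a boundary interval of $Q_I$ can only be adjacent to uncovered area across $\spath{t_I}{s_I}$, so $x$ lies on that chord. Otherwise $x$ lies on $\partial Q_I \setminus \partial P[s_I,t_I]$, which is contained in $\spath{t_I}{s_I}$. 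Either way, $\partial P'$ is contained in the union of the chords. Here $P'$ is taken as the closure of the uncovered set, and the degenerate overlapping edges, counted once, are handled by measuring the perimeter as the length of this point set.

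Second I bound each chord. Since $Q_I$ is a \GDpiece and $s_I, t_I \in Q_I$, we have $|\spath{t_I}{s_I}| = \dg{s_I}{t_I} \leq 1$. The geodesic distance in $P$ is at most the geodesic distance in $Q_I$, because $Q_I \subseteq P$, so this holds whichever metric the paper uses. Summing over the at most $|\I|$ chords gives a perimeter of at most $|\I| \leq 2\cdot\opt - 1$ by Lemma~\ref{lma:GD_boundary_cover}.

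The main obstacle is the first step: making the containment rigorous for degenerate, weakly simple $P'$. In particular, one must check that collinear or overlapping geodesic chords, which may run along reflex chains of $\partial P$, are not double counted. One must also check that portions of $\partial P$ lying on a chord do not add extra length. Because I bound the length of the point set $\partial P'$ by the total length of the chords, overlaps only help the bound, so I expect this step to go through.
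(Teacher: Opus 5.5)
Your proposal is correct and follows the paper's proof. The paper likewise observes that each piece $Q \in \Q'_b$ meets $\partial P'$ only along its chord $\spath{s_I}{t_I}$, which has length at most $1$, and then sums over the at most $2\cdot\opt-1$ pieces given by Lemma~\ref{lma:GD_boundary_cover}. Your Jordan-curve argument for the containment, together with measuring $\partial P'$ as a point set so that overlapping chords cannot hurt the bound, spells out what the paper only asserts ``by construction''.
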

\begin{proof}
    Consider a \GDpiece $Q \in \Q'_b$. Let $I = \partial P[s_I,t_I]$ be the interval on $\partial P$ that it corresponds to. 
    Observe that $\partial Q \cap \partial P' \subseteq \spath{s_I}{t_I}$ by construction. 
    Since $|\spath{s_I}{t_I}| \leq 1$ ($Q$ is a \GDpiece), we infer that $Q$  contributes a length of at most 1 to the perimeter of $P'$. 
    As $|\Q'_b| \leq 2 \cdot \opt - 1$ (see Lemma~\ref{lma:GD_boundary_cover}), the claim immediately follows.
\end{proof}
We will cover $P'$ using geodesic balls of \textit{radius} $\frac{1}{2}$ --- Note that these have geodesic \textit{diameter} at most 1 as required. 
To begin, we construct a cover $\Q''_b$ of the boundary of $P'$: Start with a vertex $v$ of $P'$ and place centers at every $\frac{1}{2}$ interval along $\partial P'$. 
We call this subroutine \GDextracentersalgo.
\begin{algorithm}[ht]
\caption{\GDextracentersalgo}\label{alg:GD_extra_centers}
\begin{algorithmic}[1]
\State \textbf{Input:} A region $P'$.
\State \textbf{Output:} A set of centers $C''_b$. 

\State $C''_b \gets \emptyset$
\State $s \gets v_0$ \Comment{$v_i$ denotes the vertex $P[i]$}
\State $i \gets 1$ \Comment{$i$ holds the index of the vertex right after $s$}
\State flag $\gets \false$ \Comment{Checks if we have walked through all of $\partial P$}

\While{flag is \false}
    \State Append $s$ to $C''_b$
    \State $j \gets$ the index of the first vertex such that $|\partial P'[s,v_j]| > \frac{1}{2}$
    \State $t \gets$ maximal point on $\overline{v_{j-1}v_{j}}$ such that $|\partial P'[s,t]| = \frac{1}{2}$
    \State $s \gets t$
    \State $i \gets j$
    \State If we have walked through all of $\partial P$, set flag $\gets \true$
\EndWhile

\State \Return $C''_b$

\end{algorithmic}
\end{algorithm}

Let $C''_b$ denote the centers placed by \GDextracentersalgo and define $\Q''_b = \{\ballgr[P']{\frac{1}{2}}{c} \mid c \in C''_b\}$. 
Corollary~\ref{cor:boundary_pieces_part} is a direct consequence of Corollary~\ref{coro:perimeter_small}.

\begin{corollary}\label{cor:boundary_pieces_part}
    $|\Q''_b| \leq 4 \cdot \opt$. 
\end{corollary}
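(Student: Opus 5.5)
The plan is to count centers directly: \GDextracentersalgo places one center per step of length exactly $\frac{1}{2}$ along $\partial P'$, so its output size is governed by the perimeter bound of Corollary~\ref{coro:perimeter_small}.

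First, I would observe that each iteration of the while loop advances the current point $s$ along $\partial P'$ by exactly $\frac{1}{2}$. This holds except in the last iteration, where the remaining walk back to the start may be shorter. The loop appends one center per iteration, so if $L$ denotes the total length of $\partial P'$, the number of iterations is at most $\lceil 2L \rceil$. Summing over the connected components of $P'$ introduces one extra partial step per component. I would handle this by noting that each component's boundary consists of portions of paths $\spath{s_I}{t_I}$, so the number of components is at most $|\Q'_b|$.

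Next, I would plug in Corollary~\ref{coro:perimeter_small}, which gives $L \le 2\cdot\opt - 1$. For a single boundary cycle this yields $|C''_b| \le \lceil 2(2\cdot\opt-1)\rceil = 4\cdot\opt - 2 \le 4\cdot\opt$. Since $|\Q''_b| = |C''_b|$ by definition, the claim follows.

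The main obstacle is the slack from several components, each contributing a rounding term. I would absorb it with a per-piece charge. Each $Q\in\Q'_b$ contributes a segment $\spath{s_I}{t_I}\cap\partial P'$ of length at most $1$. I would charge at most two centers to that portion, since it can be cut into pieces of length $\frac{1}{2}$, with the final partial step of a component charged to the piece it lies on. This gives $|C''_b| \le 2|\Q'_b| \le 2(2\cdot\opt-1) < 4\cdot\opt$. This per-piece charging is the step I would verify most carefully, specifically that a single center is never double-charged at the junction between two consecutive pieces.
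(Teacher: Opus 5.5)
Your count for a single boundary cycle is exactly the paper's argument. The paper's proof is only the remark that the corollary follows from Corollary~\ref{coro:perimeter_small}: one center per $\frac{1}{2}$ of a walk of length at most $2\cdot\opt-1$ gives at most $4\cdot\opt-2$ centers. The paper never treats several cycles of $\partial P'$ separately. The gap is in what you add for that case.

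The reason you give for having at most $|\Q'_b|$ components does not imply it, since nothing prevents one path $\spath{s_I}{t_I}$ from bordering several components. Even granting the bound, you would only get $2(2\cdot\opt-1)+(2\cdot\opt-1)$, far above $4\cdot\opt$.

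The per-piece charge of two centers also fails, for three reasons.
\begin{enumerate}
\item The set $\spath{s_I}{t_I}\cap\partial P'$ need not be a single arc of a single cycle. The boundary walk can leave the path, e.g.\ where it touches or runs along $\partial P$, and return to it later, possibly on another cycle. A half-open arc of length $\ell$ can receive $\lceil 2\ell\rceil$ centers, so two arcs of total length at most $1$ can already receive three.
\item Every cycle gets at least one center, however short it is.
\item The last step on a cycle can be shorter than $\frac{1}{2}$. So one arc of length less than $1$ can contain three centers: the last one placed, the starting point, and the next one. That is exactly the arc to which you charge the final partial step.
\end{enumerate}
So your charging really gives $2|\Q'_b|$ plus up to one center per cycle. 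That is the rounding loss you wanted to eliminate, and $|C''_b|\le 2|\Q'_b|$ does not follow.

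To close the multi-cycle case you need a genuine bound on that loss. Write $L_R$ for the length of a cycle $R$ of $\partial P'$ and $L=\sum_R L_R\le 2\cdot\opt-1$. You are only guaranteed slack $4\cdot\opt-2L\ge 2$, while each cycle can lose almost one center. For instance, showing that $\partial P'$ has at most three cycles would suffice: then $\sum_R\lceil 2L_R\rceil<2L+3\le 4\cdot\opt+1$, and the left side is an integer. Alternatively, drop the charging and state explicitly, as the paper tacitly does, that \GDextracentersalgo traverses $\partial P'$ as a single closed walk of length at most $2\cdot\opt-1$.
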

We show a complementary lemma to Lemma~\ref{lma:covers_half}: 
$\Q_b:= \Q_b' \cup \Q''_b$ covers every point that is a Euclidean distance at most $\frac{1}{4}$ from $\partial P$. 
Then, Corollary~\ref{cor:GD_euclidean_reduction} and Observation~\ref{obs:gd_centers_are_far} follow from Lemma~\ref{lemma:covers_half_part} just as Corollary~\ref{cor:euclidean_reduction} and Observation~\ref{obs:centers_are_far} did from Lemma~\ref{lma:covers_half}.

\begin{lemma}\label{lemma:covers_half_part}
    $\Q_b$ covers every point of $P$ that is a \textbf{Euclidean} distance at most $\frac{1}{4}$ from $\partial P$.
\end{lemma}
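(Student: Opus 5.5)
We mirror the proof of Lemma~\ref{lma:covers_half}. Fix $p \in P$ with Euclidean distance at most $\frac{1}{4}$ from $\partial P$, and let $q \in \partial P$ be a closest boundary point. As before, no interior point of $\overline{pq}$ lies on $\partial P$, so $\overline{pq} \subset P$ and $\de{p}{q} \leq \frac{1}{4}$. If $p$ lies in some piece of $\Q'_b$ we are done. Otherwise $p \in P'$, and we will show that $p$ is within geodesic distance $\frac{1}{2}$ (measured in $P'$) of some center in $C''_b$.

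Suppose $p \in P'$. The point $q$ lies on $\partial P$, hence in some interval $I = \partial P[s_I,t_I] \in \I$, and so $q$ lies in the corresponding piece $Q \in \Q'_b$, which is bounded by $\partial P[s_I,t_I]$ and $\spath{t_I}{s_I}$. Traversing the segment from $p$ to $q$, we start outside $Q$ (or on its boundary) and end in $Q$. By the Jordan curve theorem applied to the closed curve $\partial Q$, the segment $\overline{pq}$ must meet $\partial Q \setminus \partial P \subseteq \spath{s_I}{t_I}$. It also meets $\partial P'$, since $\partial Q \cap \partial P' \subseteq \spath{s_I}{t_I}$ as noted in the proof of Corollary~\ref{coro:perimeter_small}. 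Let $q'$ be the first point of $\overline{pq}$, starting from $p$, that lies on $\partial P'$. Then the subsegment $\overline{pq'}$ lies in $P'$, and $\de{p}{q'} \leq \de{p}{q} \leq \frac{1}{4}$.

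Next, \GDextracentersalgo places centers at arclength spacing $\frac{1}{2}$ along $\partial P'$, with possibly one shorter final gap. Hence $q'$ lies on a walk $\partial P'[c,c']$ of length at most $\frac{1}{2}$ between consecutive centers $c, c' \in C''_b$, and one of them, say $c$, satisfies $|\partial P'[c,q']| \leq \frac{1}{4}$. A boundary walk of $P'$ is a path in $P'$, so $\dg[P']{c}{q'} \leq \frac{1}{4}$. Combining with the segment, $\dg[P']{c}{p} \leq \frac{1}{4} + \frac{1}{4} = \frac{1}{2}$, so $p \in \ballgr[P']{\frac{1}{2}}{c} \in \Q''_b$.

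The main obstacle is the degeneracy of $P'$. It is a collection of degenerate weakly simple polygons, so the meaning of ``boundary walk'' and the claim that $\overline{pq'} \subset P'$ need care, particularly when $\spath{s_I}{t_I}$ runs along $\partial P$, or when several components of $P'$ occur. I would handle this by running the center placement on each component's boundary separately, noting that Corollary~\ref{coro:perimeter_small} bounds the total length so the count in Corollary~\ref{cor:boundary_pieces_part} still holds. I would treat points of $\partial P'$ lying on $\partial P$ as boundary points as well. This is consistent, since if $q' = q$ the argument above still applies with $\de{p}{q'} \leq \frac{1}{4}$.
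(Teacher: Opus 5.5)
Your proof is correct and follows the paper's argument. In both, you find a point of $\partial P'$ within Euclidean distance $\frac{1}{4}$ of $p$ that is joined to $p$ by a segment inside $P'$, and then use the $\frac{1}{2}$-spacing of $C''_b$ along $\partial P'$ to reach a center within $\frac{1}{4}$ more. The paper takes the nearest point of $\partial P'$ and justifies the $\frac{1}{4}$ bound by saying $\partial P'$ separates $\partial P$ from $P'$; you take the first point of $\partial P'$ on $\overline{pq}$, which makes that separation step explicit.

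Your Jordan-curve detour is unnecessary, and the inclusion $\partial Q \cap \partial P' \subseteq \pi_P(s_I,t_I)$ does not by itself give that the segment meets $\partial P'$. That fact follows simply because $q \in \partial P$ cannot be an interior point of $P' \subseteq P$.
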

\begin{proof}
    Consider a point $p \in P$ that is a Euclidean distance at most $\frac{1}{4}$ from $\partial P$.
    If $p \in P \setminus P'$, it must be covered by $\Q'_b$ by definition. 
    Assume that $p \in P'$. 
    Since $\de{p}{\partial P} \leq \frac{1}{4}$ and $\partial P'$ separates $\partial P$ and $P'$, $\de{p}{\partial P'} \leq \frac{1}{4}$. 
    Let $q \in \partial P'$ be the closest point to $p$ on $\partial P'$ and let $c$ be the closest center (where the distance is measured along $\partial P'$) in $C''_b$ to $q$. 
    Note that $\overline{pq} \subset P'$ --- otherwise there is a point on $\partial P'$ closer to $p$ than $q$.
    Therefore, we have that $\dg[P']{p}{q} = \de{p}{q} \leq \frac{1}{4}$. 
    Moreover, since \GDextracentersalgo places centers at every $\frac{1}{2}$ interval along $\partial P'$, $\dg[P']{c}{q} \leq \frac{1}{4}$.
    This shows that the path from $c$ to $p$ through $q$ has length at most $\frac{1}{2}$, and thus $p \in \ballgr[P']{\frac{1}{2}}{c}$.
    So $p$ is covered by $\Q''_b$ as required.
\end{proof}
\begin{corollary}\label{cor:GD_euclidean_reduction}
    Consider two points $p$ and $q$ in $P$ which are not covered by $\Q_b$. 
    If $\de{p}{q} \leq \frac{1}{2}$, $\dg{p}{q} \leq \frac{1}{2}$.
\end{corollary}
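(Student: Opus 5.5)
This mirrors the proof of Corollary~\ref{cor:euclidean_reduction}, with Lemma~\ref{lemma:covers_half_part} taking the place of Lemma~\ref{lma:covers_half} and the threshold halved. Take $p,q \in P$, both uncovered by $\Q_b$, with $\de{p}{q} \leq \frac{1}{2}$. The goal is to show that the segment $\overline{pq}$ lies in $P$. Once that holds, $\spath{p}{q} = \overline{pq}$ because $P$ is simple, and so $\dg{p}{q} = \de{p}{q} \leq \frac{1}{2}$.

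We argue by contradiction. Suppose $\overline{pq}$ meets $\partial P$, so there is a point $r \in \overline{pq} \cap \partial P$. Then
\[
\de{p}{r} + \de{r}{q} = \de{p}{q} \leq \tfrac{1}{2},
\]
so at least one of $\de{p}{r}$ and $\de{r}{q}$ is at most $\frac{1}{4}$. Say $\de{p}{r} \leq \frac{1}{4}$. Then $p$ is within Euclidean distance $\frac{1}{4}$ of $\partial P$, and Lemma~\ref{lemma:covers_half_part} says $p$ is covered by $\Q_b$. This contradicts the choice of $p$. The case $\de{r}{q} \leq \frac{1}{4}$ is symmetric.

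Therefore $\overline{pq}$ does not meet $\partial P$, apart from possibly touching it at $p$ or $q$ themselves. Since $p,q \in P$, the whole segment lies in $P$, which gives the claim.

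One point needs care. As in Corollary~\ref{cor:euclidean_reduction}, the clean statement to use is "$\overline{pq} \cap \partial P = \emptyset$ implies $\overline{pq} \subset P$." An endpoint lying on $\partial P$ is already excluded, since such a point is at distance $0$ from $\partial P$ and hence covered by Lemma~\ref{lemma:covers_half_part}. Beyond this degenerate case, the argument is a routine adaptation and I do not expect a real obstacle.
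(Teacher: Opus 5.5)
Your proposal is correct and is exactly the paper's argument: the paper obtains this corollary by rerunning the proof of Corollary~\ref{cor:euclidean_reduction} with Lemma~\ref{lemma:covers_half_part} in place of Lemma~\ref{lma:covers_half}, so any $r \in \overline{pq} \cap \partial P$ lies within Euclidean distance $\frac{1}{4}$ of $p$ or $q$, contradicting that both are uncovered. Your separate remark about endpoints on $\partial P$ is harmless but redundant, since the cases $r = p$ and $r = q$ are already handled by the same inequality.
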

Let $P'' = P' \setminus \cup_{Q \in \Q''_b} Q$. 
We reuse \greedyinterioralgo to cover $P''$; we call \greedyinterioralgo with $U = P''$ and $r = \frac{1}{2}$ (Line~\ref{line:GD_interior}, \GDgreedypolygonalgo). 
$C_i$ denotes the centers placed by \greedyinterioralgo in $P''$; $\Q_i := \{\ballgr[P']{\frac{1}{2}}{c} \mid c \in C_i\}$. 
We have:
\begin{observation}\label{obs:gd_centers_are_far}
    For any two distinct centers $c$ and $c'$ in $C_i$, $\de{c}{c'} > \frac{1}{2}$.
\end{observation}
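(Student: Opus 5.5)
The argument mirrors the proof of Observation~\ref{obs:centers_are_far}. First I would show that for distinct $c, c' \in C_i$ we have $\dg{c}{c'} > \frac{1}{2}$; then I would apply the contrapositive of Corollary~\ref{cor:GD_euclidean_reduction}.

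\textbf{Geodesic separation.} Order the centers by when \greedyinterioralgo picks them, and suppose $c$ is chosen before $c'$. After $c$ is placed, the algorithm removes $\ballgr[P']{\frac{1}{2}}{c}$ from the remaining region $R$. Since $c'$ is later chosen in the interior of $R$, we get $\dg[P']{c}{c'} > \frac{1}{2}$. One subtlety is that the pieces $\Q_i$ are balls measured in $P'$, while Corollary~\ref{cor:GD_euclidean_reduction} concerns $\dg{\cdot}{\cdot}$ in $P$. I would reconcile the two in one of the following ways.
\begin{itemize}
\item Note that the proof of Corollary~\ref{cor:GD_euclidean_reduction} in fact gives the stronger conclusion $\spath{p}{q} = \overline{pq}$. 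If $\overline{pq}$ met $\partial P'$ at some point $r$, then $r$ would lie within $\frac{1}{4}$ of $p$ or $q$. Points of $\partial P'$ lie in $\Q'_b$ or are within $0$ of $\partial P'$ and hence are covered by $\Q''_b$. By the argument of Lemma~\ref{lemma:covers_half_part}, applied with $\partial P'$ in place of $\partial P$, this forces $p$ or $q$ to be covered, which is a contradiction.
\item Alternatively, observe that $\dg[P']{c}{c'} \ge \dg{c}{c'}$ and argue directly about the segment.
\end{itemize}

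\textbf{Euclidean conclusion.} Both $c$ and $c'$ lie in $P''$, so neither is covered by $\Q_b$. Suppose for contradiction that $\de{c}{c'} \le \frac{1}{2}$. Then the segment $\overline{cc'}$ cannot meet $\partial P'$: any intersection point would be within $\frac{1}{4}$ of $c$ or $c'$, and by Lemma~\ref{lemma:covers_half_part} (in its $P'$ form, i.e.\ that $\Q''_b$ covers every point of $P'$ within Euclidean distance $\frac{1}{4}$ of $\partial P'$) that endpoint would be covered. Hence $\overline{cc'} \subset P'$, which gives $\dg[P']{c}{c'} = \de{c}{c'} \le \frac{1}{2}$. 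This contradicts the separation established above, so $\de{c}{c'} > \frac{1}{2}$.

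\textbf{Main obstacle.} The hard step is the mismatch between distances in $P$ and in $P'$. It is resolved by working with $\partial P'$ throughout, which the proof of Lemma~\ref{lemma:covers_half_part} already does. I would state that $P'$-version explicitly before concluding.
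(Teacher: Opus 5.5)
Your proof is correct and takes the same route as the paper. The paper derives this observation exactly as it derives Observation~\ref{obs:centers_are_far}: the greedy choice in \greedyinterioralgo gives geodesic separation, and the contrapositive of Corollary~\ref{cor:GD_euclidean_reduction} turns that into Euclidean separation.

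The $P$-versus-$P'$ mismatch you flag does not arise on the paper's route. There \greedyinterioralgo is called with input polygon $P$ and removes $\ballgr[P]{\frac{1}{2}}{c}$, so it yields $\dg{c}{c'} > \frac{1}{2}$ directly and the corollary applies verbatim. Your $\partial P'$-form of Lemma~\ref{lemma:covers_half_part} is exactly what that lemma's proof establishes, so it is valid. It is also what you would need if the greedy removed the $P'$-balls used to define $\Q_i$, which makes your argument the slightly more robust of the two.
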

\begin{observation}\label{obs:GD_covers_P}
    $\Q_b \cup \Q_i$ covers every point in $P$.
\end{observation}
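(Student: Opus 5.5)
The statement is Observation~\ref{obs:GD_covers_P}: $\Q_b \cup \Q_i$ covers every point of $P$. I will prove it by splitting $P$ into the three nested regions the algorithm handles, namely $P \setminus P'$, $P' \setminus P''$, and $P''$. Each point lies in exactly one of these, and I will check that each region is covered by the corresponding family.

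First, take $p \in P \setminus P'$. By definition, $P'$ is the part of $P$ left uncovered by $\Q'_b$, so $p$ lies in some piece of $\Q'_b \subseteq \Q_b$. Next, take $p \in P' \setminus P''$. Since $P'' = P' \setminus \cup_{Q \in \Q''_b} Q$, the point $p$ lies in some $\ballgr[P']{\frac{1}{2}}{c}$ with $c \in C''_b$, that is, in a piece of $\Q''_b \subseteq \Q_b$.

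The remaining case is $p \in P''$, and this is handled by the termination condition of \greedyinterioralgo, called with $U = P''$ and $r = \frac{1}{2}$. The loop maintains $R$ and stops only when $R = \emptyset$. Each iteration removes $\ballgr[P]{\frac{1}{2}}{c}$, so every $p \in P''$ lies in some such ball with $c \in C_i$.

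I expect one subtle step here, because the pieces in $\Q_i$ are defined as geodesic balls in $P'$, namely $\ballgr[P']{\frac{1}{2}}{c}$, while the algorithm removes balls measured in $P$. I would argue the following:
\begin{itemize}
\item For $c \in C_i \subseteq P''$, the sets $\ballgr[P]{\frac{1}{2}}{c}$ and $\ballgr[P']{\frac{1}{2}}{c}$ agree on the points that matter.
\item If $\dg{c}{p} \le \frac{1}{2}$ but the shortest path $\spath{c}{p}$ leaves $P'$, it must cross $\partial P'$ at some point $q$. Then $\de{c}{q} \le \frac{1}{2}$, so by Lemma~\ref{lemma:covers_half_part}-style reasoning $c$ would lie within Euclidean distance $\frac{1}{2}$ of $\partial P'$.
\item This does not immediately contradict $c \in P''$, since $c$ is only guaranteed to be farther than $\frac{1}{4}$ from $\partial P$.
\end{itemize}

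The cleanest fix is to make the algorithm call use geodesic distance in $P'$ for the removal step, which is a harmless reading of Line~\ref{line:maintain_overlay}. Alternatively, I would simply note that in either interpretation the removed ball is contained in a piece that has geodesic diameter at most $1$. Either way, every point of $P''$ is covered by $\Q_i$, and the three cases together cover all of $P$.

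Termination of the loop, as opposed to correctness upon termination, follows from Observation~\ref{obs:gd_centers_are_far} together with a packing bound in the bounded region $P$: there are only finitely many points pairwise farther apart than $\frac{1}{2}$.
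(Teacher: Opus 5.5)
Your case split and your appeal to the stopping condition of \greedyinterioralgo match the paper's argument. The paper states this observation without proof and treats it as immediate, exactly as in Observation~\ref{obs:covers_poly}. You are also right about a mismatch the paper passes over. Line~\ref{line:maintain_overlay} removes $\ballgr[P]{\frac{1}{2}}{c}$, whereas $\Q_i$ consists of the balls $\ballgr[P']{\frac{1}{2}}{c}$. Since $\dg[P']{\cdot}{\cdot} \ge \dg{\cdot}{\cdot}$, the latter can a priori be smaller.

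However, you leave this gap open. Re-reading line~\ref{line:maintain_overlay} with $P'$-distances changes the algorithm, and its being ``harmless'' is exactly what needs proof. Your termination remark then also cites Observation~\ref{obs:gd_centers_are_far}, which was derived for the $P$-distance version. Your alternative remark, that the removed ball has geodesic diameter at most $1$, does not show that $p$ lies in the specific piece $\ballgr[P']{\frac{1}{2}}{c} \in \Q_i$, and that is what coverage requires.

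The missing idea is to use the distance bound for \emph{both} endpoints, and relative to $\partial P'$ rather than $\partial P$. After its first step, the proof of Lemma~\ref{lemma:covers_half_part} only uses $\de{p}{\partial P'} \le \frac{1}{4}$. So it actually shows that every point of $P'$ within Euclidean distance $\frac{1}{4}$ of $\partial P'$ is covered by $\Q''_b$. Hence every point of $P''$, including both $p$ and every center $c \in C_i$, lies at distance more than $\frac{1}{4}$ from $\partial P'$.

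Now suppose $\spath{c}{p}$ met $\partial P'$ at some point $q$. Its length would then be at least $\de{c}{q} + \de{q}{p} > \frac{1}{4} + \frac{1}{4} = \frac{1}{2}$, contradicting $\dg{c}{p} \le \frac{1}{2}$. So $\spath{c}{p}$ stays in $P'$, giving $\dg[P']{c}{p} = \dg{c}{p} \le \frac{1}{2}$, i.e.\ $p \in \ballgr[P']{\frac{1}{2}}{c}$.

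Your third bullet bounds only the portion of the path from $c$ to $q$, which is why it produced no contradiction. With this step inserted, your three-case argument proves the observation for the algorithm exactly as written.
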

\begin{corollary}\label{cor:close_center}
    Consider a point $p \in P'$ and let $c \in C''_b \cup C_i$ be the center that is closest to $p$. Then, $\dg{c}{p} \leq \frac{1}{2}$.
\end{corollary}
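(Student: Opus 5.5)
The plan is to read ``closest'' as closest with respect to the geodesic distance $\dg{\cdot}{\cdot}$ in $P$. With that reading, it suffices to exhibit \emph{some} center $c' \in C''_b \cup C_i$ with $\dg{c'}{p} \leq \frac{1}{2}$. The closest center $c$ then satisfies $\dg{c}{p} \leq \dg{c'}{p} \leq \frac{1}{2}$.

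To find $c'$, I split on whether $p$ lies in $P''$.
\begin{itemize}
\item If $p \in P' \setminus P''$, then by the definition $P'' = P' \setminus \cup_{Q \in \Q''_b} Q$, the point $p$ lies in some $\ballgr[P']{\frac{1}{2}}{c'}$ with $c' \in C''_b$.
\item If $p \in P''$, then \greedyinterioralgo was run with $U = P''$ and $r = \frac{1}{2}$, and it terminates only when the remaining region is empty. So $p$ is removed in some iteration, and hence lies in $\ballgr[P']{\frac{1}{2}}{c'}$ for some $c' \in C_i$. This is exactly the content of Observation~\ref{obs:GD_covers_P} restricted to $P'$.
\end{itemize}
In both cases we get $\dg[P']{c'}{p} \leq \frac{1}{2}$.

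It remains to pass from distances in $P'$ to distances in $P$. Since $P' \subseteq P$, every path inside $P'$ is also a path inside $P$. Therefore $\dg{c'}{p} \leq \dg[P']{c'}{p} \leq \frac{1}{2}$, which finishes the argument.

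The only delicate point I expect is the degeneracy of $P'$: it is a weakly simple, possibly disconnected region, and its boundary points touch pieces of $\Q'_b$. I need that every point of $P'$, including points on $\partial P'$, falls into one of the two cases above, so that the cover really is by $\Q''_b \cup \Q_i$ and not only by $\Q'_b$. I would handle this directly. Points of $\partial P'$ are covered by $\Q''_b$, because \GDextracentersalgo places centers every $\frac{1}{2}$ along $\partial P'$, so every boundary point is within boundary-distance $\frac{1}{4}$ of some center. Every other point of $P'$ is covered by definition of $P''$ and the termination of \greedyinterioralgo.

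If instead ``closest'' is intended in the $P'$-geodesic sense, the same argument applies without the final comparison step.
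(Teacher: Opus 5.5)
Your proof is correct and takes essentially the same route as the paper. The paper simply invokes Observation~\ref{obs:GD_covers_P} to place $p$ in some $\frac{1}{2}$-radius geodesic ball of $\Q''_b \cup \Q_i$ and concludes; your case split on $P''$ and the comparison $\dg{c'}{p} \leq \dg[P']{c'}{p}$ make explicit two steps the paper leaves implicit, namely that points of $P'$ are covered by $\Q''_b \cup \Q_i$ and not only by $\Q'_b$, and the passage from $P'$-distances to $P$-distances.
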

\begin{proof}
    From Observation~\ref{obs:GD_covers_P}, we infer that $p$ is covered by a piece in $\Q''_b \cup \Q_i$.
    Since these are $\frac{1}{2}$-radius geodesic balls, our claim follows.
\end{proof}
Since we need to output a partition (and not just a cover), we compute the geodesic Voronoi diagram of $C_b'' \cup C_i$ inside $P'$ (see Remark~\ref{rmk:partition_from_centers}). 
Let $\Q$ denote the union of these Voronoi cells and $\Q'_b$.
\begin{lemma}\label{lma:is_GD_partition}
    $\Q$ is a \wordGDPartition of $P$.
\end{lemma}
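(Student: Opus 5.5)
We need to show three things: the pieces of $\Q$ cover $P$, they are pairwise interior disjoint, and each piece has geodesic diameter at most $1$ in $P$. The pieces of $\Q'_b$ are \GDpiece{}s by construction (Lemma~\ref{lma:GD_boundary_cover}, following \cite{abrahamsen_partitioning_2022}). They are interior disjoint from one another because their boundary intervals are disjoint and each piece is the region between $\partial P[s_I,t_I]$ and $\spath{t_I}{s_I}$. Shortest paths in a simple polygon do not cross, so two such regions share at most boundary. By definition $P'$ is the complement of their union in $P$. The Voronoi cells of $C''_b \cup C_i$ computed inside $P'$ partition $P'$ into interior-disjoint cells (Remark~\ref{rmk:partition_from_centers}, applied to each component of $P'$). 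Hence $\Q$ covers $P = (P\setminus P') \cup P'$ with interior-disjoint pieces.

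It remains to bound the geodesic diameter of each Voronoi cell $V$ with site $c$. For $p \in V$, $c$ is the nearest site to $p$ with respect to $d_{P'}$. By Observation~\ref{obs:GD_covers_P}, $p$ lies in some ball $\ballgr[P']{\frac{1}{2}}{c'}$ with $c' \in C''_b \cup C_i$. Therefore $\dg[P']{c}{p} \le \dg[P']{c'}{p} \le \frac{1}{2}$. This is Corollary~\ref{cor:close_center}, which I would restate with $d_{P'}$, since both the balls and the Voronoi diagram are taken in $P'$. For $p,q \in V$, the triangle inequality in $P$ then gives
\[
\dg{p}{q} \le \dg{p}{c} + \dg{c}{q} \le \dg[P']{p}{c} + \dg[P']{c}{q} \le 1,
\]
using $\dg{x}{y} \le \dg[P']{x}{y}$ because $P' \subseteq P$. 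So every cell is a \GDpiece. Since the diameter is measured in $P$, we need not worry whether $V$ is geodesically convex.

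The main obstacle is the consistency of metrics. The balls $\Q''_b$ and $\Q_i$ are defined as geodesic balls in $P'$, but Corollary~\ref{cor:close_center} is stated with $d_P$. I would resolve this by checking that the Voronoi diagram is taken with respect to $d_{P'}$, as the text says "inside $P'$". Covering then yields the $d_{P'}$ bound directly, and the final step only requires the monotonicity $d_P \le d_{P'}$.

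A secondary technicality is that $P'$ may be a collection of degenerate weakly simple polygons rather than a single simple polygon. I would apply the Voronoi construction to each component separately, noting that every component contains at least one center on its boundary from $C''_b$. Degenerate (zero-width) parts of $P'$ have empty interior, so they do not affect interior-disjointness.
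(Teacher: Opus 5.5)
Your proof is correct and follows the paper's: $\Q'_b$ together with the geodesic Voronoi cells of $C''_b \cup C_i$ in $P'$ partitions $P$, and each cell lies within distance $\frac{1}{2}$ of its site by the covering property (Corollary~\ref{cor:close_center}). Your explicit use of $d_{P'}$ for both the balls and the Voronoi diagram, followed by $d_P \le d_{P'}$, is a cleaner version of what the paper leaves implicit.

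One caution. Read literally, the preliminaries measure a piece's geodesic diameter inside the piece itself, so your remark that convexity of $V$ need not be checked is not justified. The fix is short. A geodesic Voronoi cell $V$ is star-shaped from its site: every $x$ on the shortest path from $c$ to $p \in V$ in $P'$ also lies in $V$, because for every other site $c'$,
\[
d_{P'}(c',x) \ge d_{P'}(c',p) - d_{P'}(x,p) \ge d_{P'}(c,p) - d_{P'}(x,p) = d_{P'}(c,x).
\]
Hence your bound holds measured inside $V$ as well.
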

\begin{proof}
    The pieces of $\Q'_b$ are interior disjoint from \cite[Sections 3.1 and 3.2]{abrahamsen_partitioning_2022}.
    Since the remaining pieces are from a Voronoi partition of $P \setminus \Q'_b$, it follows that $\Q$ is a partition of $P$.

    $\Q'_b$ contains \GDpiece{}s \cite[Sections 3.1 and 3.2]{abrahamsen_partitioning_2022}; consider a piece $Q \in \Q \setminus \Q'_b$.
    Then, $Q$ is a Voronoi cell of a center $c \in C''_b \cup C_i$.
    It follows from Corollary~\ref{cor:close_center} that $Q \subseteq \ballgr{\frac{1}{2}}{c}$.
    $Q$ is therefore a \GDpiece.
\end{proof}
From Lemma~\ref{lma:GD_boundary_cover} and Corollary~\ref{cor:boundary_pieces_part}, we have that $|\Q'_b| \leq 2 \cdot \opt$ and $|\Q''_b| \leq 4 \cdot \opt$ respectively; so to achieve the approximation guarantee promised in Theorem~\ref{thm:GD_approx}, we are left to show that $|\Q_i| \leq 9 \cdot \opt$. 
Note that we cannot use a similar argument as in Lemma~\ref{lma:interior_approx}: A \GDpiece does not necessarily need to fit inside a $\balle[\frac{1}{2}]{c}$ for some center $c \in \reals^2$; consider, for example, the equilateral triangle with side lengths 1.
See Section~\ref{sec:conclusion} for further discussion. We make two observations before proving that $|\Q_i| \leq 9 \cdot \opt$:
\begin{observation}\label{obs:fits_inside_square}
    Let $Q$ be a \GDpiece of $P$. Then, $Q$ fits inside a unit square.
\end{observation}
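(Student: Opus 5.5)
Since $Q$ is a \GDpiece, any two points $p,q\in Q$ satisfy $\dg{p}{q}\le 1$. A shortest path is never shorter than the straight segment, so $\de{p}{q}\le\dg{p}{q}$. Hence the Euclidean diameter of $Q$ is at most $1$, and it suffices to show that any planar set of Euclidean diameter at most $1$ fits inside an axis-parallel unit square.

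Consider the projections of $Q$ onto the $x$-axis and the $y$-axis. For any $p,q\in Q$ we have $|p_x-q_x|\le\de{p}{q}\le 1$. The projection of $Q$ onto the $x$-axis is therefore an interval $[x_{\min},x_{\max}]$ with $x_{\max}-x_{\min}\le 1$. Here $Q$ is a compact polygon, so the extremes are attained. In the same way, the $y$-projection lies in an interval $[y_{\min},y_{\max}]$ of length at most $1$.

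It follows that $Q\subseteq[x_{\min},x_{\max}]\times[y_{\min},y_{\max}]$. This rectangle has both side lengths at most $1$, so it is contained in the unit square $[x_{\min},x_{\min}+1]\times[y_{\min},y_{\min}+1]$. This proves the observation.

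There is no real obstacle here. The only point to keep in mind is that the geodesic metric dominates the Euclidean one, which is immediate because every path between two points is at least as long as the segment joining them.
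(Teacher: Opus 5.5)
Your proof is correct and is the same argument as the paper's: bound $x_{\max}-x_{\min}$ and $y_{\max}-y_{\min}$ by the diameter, then place the bounding box inside a unit square. You also state explicitly the step $\de{p}{q}\le\dg{p}{q}$, which the paper uses without comment.
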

\begin{proof}
    Let $x_{\min}$, $x_{\max}$, $y_{\min}$, and $y_{\max}$ denote the smallest and largest $x$ and $y$-coordinates of points in $Q$. Since the diameter of $Q$ is at most 1, $x_{\max} - x_{\min} \leq 1$ and $y_{\max} - y_{\min} \leq 1$. Our claim now follows.
\end{proof}
\begin{observation}\label{obs:nine_inside_square}
    Consider a unit square $S$. There \textbf{does not} exist a subset $X \subseteq S$ with $|X| \geq 10$ such that $\de{p}{q} > \frac{1}{2}$ for any two points $p$ and $q$ in $X$. 
\end{observation}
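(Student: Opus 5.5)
Divide the unit square $S$ into nine closed subsquares of side $\frac{1}{3}$ by splitting each side into three equal parts. Each subsquare has diameter $\frac{\sqrt{2}}{3} \approx 0.471 < \frac{1}{2}$, so any two points in the same subsquare are at Euclidean distance at most $\frac{\sqrt 2}{3} \le \frac{1}{2}$.

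Now suppose $X \subseteq S$ has $|X| \geq 10$. The nine closed subsquares cover $S$, so every point of $X$ lies in at least one of them. Assign each point to one subsquare containing it, breaking ties arbitrarily on shared boundaries. By pigeonhole, some subsquare receives at least two distinct points $p,q \in X$. Then $\de{p}{q} \leq \frac{\sqrt2}{3} \leq \frac12$, which contradicts the assumption that all pairwise distances in $X$ exceed $\frac{1}{2}$.

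The only step needing care is the constant check $\frac{\sqrt{2}}{3} \le \frac12$, which is equivalent to $2\sqrt2 \le 3$, i.e.\ $8 \le 9$, and so holds. There is no real obstacle. A $3\times 3$ grid suffices; a coarser $2\times 2$ grid would fail because its subsquares have diameter $\frac{\sqrt2}{2} > \frac12$.
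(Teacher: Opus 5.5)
Your proof is correct and is the same as the paper's: split $S$ into a $3\times 3$ grid of subsquares of diameter $\frac{\sqrt{2}}{3} < \frac{1}{2}$ and apply pigeonhole. Your explicit assignment of points on shared boundaries to a single subsquare is a small piece of care that the paper's proof leaves implicit.
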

\begin{proof}
    We partition $S$ into 9 equal-sized subsquares $\S = \{S_1, S_2 \dots S_9\}$. Note that the Euclidean diameter of a $S \in \S$ is $\frac{\sqrt{2}}{3} < \frac{1}{2}$. Therefore, each $S \in \S$ contains at most one point from $X$; this proves our claim.  
\end{proof}
\begin{lemma}\label{lma:interior_approx_part}
    $|\Q_i| \leq 9 \cdot \opt$.
\end{lemma}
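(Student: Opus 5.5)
The argument mirrors the proof of Lemma~\ref{lma:interior_approx}, with the Euclidean unit ball replaced by a unit square. Fix an optimal \wordGDPartition $\Q^*$ of $P$, so $|\Q^*| = \opt$. Since $\Q^*$ covers $P$ and $C_i \subseteq P$, every center $c \in C_i$ lies in some piece $Q \in \Q^*$. It therefore suffices to show that each $Q \in \Q^*$ contains at most $9$ points of $C_i$. Then $|\Q_i| = |C_i| \le \sum_{Q \in \Q^*} |C_i \cap Q| \le 9 \cdot \opt$, where points lying on shared boundaries are counted at least once, which only helps the inequality.

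To bound $|C_i \cap Q|$, first use Observation~\ref{obs:fits_inside_square}. The piece $Q$ is a \GDpiece of $P$, so its geodesic diameter is at most $1$. Euclidean distance is at most geodesic distance, so $Q$ also has Euclidean diameter at most $1$ and fits inside an axis-parallel unit square $S$. Hence $C_i \cap Q \subseteq S$.

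Next, by Observation~\ref{obs:gd_centers_are_far}, any two distinct centers of $C_i$ are at Euclidean distance greater than $\frac{1}{2}$. Observation~\ref{obs:nine_inside_square} then forbids $10$ or more such points inside $S$, so $|C_i \cap Q| \le 9$.

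The main thing to check is Observation~\ref{obs:gd_centers_are_far}, which the paper asserts by analogy. The argument goes as follows. \greedyinterioralgo picks each new center in the interior of the still-uncovered region, so for distinct $c, c' \in C_i$ we have $\dg[P']{c}{c'} > \frac{1}{2}$; the balls are measured in $P'$. Both centers lie in $P''$, so they are not covered by $\Q_b$. Suppose $\de{c}{c'} \le \frac{1}{2}$. The argument of Corollary~\ref{cor:GD_euclidean_reduction} applies: if the segment $\overline{cc'}$ met $\partial P'$ or $\partial P$, one endpoint would be within $\frac{1}{4}$ of the boundary. That endpoint would then be covered by Lemma~\ref{lemma:covers_half_part} (whose proof in fact handles distance to $\partial P'$). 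So $\overline{cc'} \subset P'$, giving $\dg[P']{c}{c'} \le \frac{1}{2}$, a contradiction.

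With that in place, the counting above completes the proof.
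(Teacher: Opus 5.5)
Your proof is correct and follows the paper's argument: you place each optimal piece in a unit square (Observation~\ref{obs:fits_inside_square}), use the pairwise Euclidean separation of $C_i$ (Observation~\ref{obs:gd_centers_are_far}) together with the nine-subsquare bound (Observation~\ref{obs:nine_inside_square}), and charge each center to a piece of $\Q^*$ containing it. Your extra justification of Observation~\ref{obs:gd_centers_are_far} is also sound and fills in a detail the paper leaves implicit: $\Q_i$ uses geodesic balls measured in $P'$, while Corollary~\ref{cor:GD_euclidean_reduction} is stated for distances in $P$, and as you note, the proof of Lemma~\ref{lemma:covers_half_part} does cover every point of $P'$ within $\frac{1}{4}$ of $\partial P'$.
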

\begin{proof}
   Consider an optimal \wordGDPartition $\Q^*$ of $P$ and a \GDpiece $Q \in \Q^*$. Then, there is a unit square $S$ such that $Q \subseteq S$ (see Observation~\ref{obs:fits_inside_square}). Let $X = C_i \cap S$; note that $X \supseteq C_i \cap Q$. Every distinct pair of points in $X$ are a Euclidean distance $\frac{1}{2}$ apart (from Observation~\ref{obs:gd_centers_are_far}). So, from Observation~\ref{obs:nine_inside_square}, we have $|X| \leq 9$. Since every point in $C_i$ is covered by some \GDpiece in $\Q^*$, we have $|C_i| \leq 9 \cdot |\Q^*|$; the result now follows. 
\end{proof}

\subsection{Runtime analysis}\label{sec:Gd_runtime}
We refer to our algorithm to find a \wordGDPartition of a simple polygon as \GDgreedypolygonalgo; we now prove that it runs in polynomial time. 

\begin{algorithm}[ht]
\caption{\GDgreedypolygonalgo}\label{alg:GD_greedy_full}
\begin{algorithmic}[1]
\State \textbf{Input:} A simple polygon $P$ stored as a circular list.
\State \textbf{Output:} A \wordGDPartition $\Q$ of $P$. 

\State $(\Q'_b, \I) \gets \GDgreedyboundaryalgo(P)$\label{line:GD_run_boundary}
\State $P' \gets P \setminus \cup_{Q \in Q'_b} Q$ \label{line:GD_compute_P'}
\State $C''_b \gets \GDextracentersalgo(P')$\label{line:GD_run_extra_centers}
\State $\Q''_b \gets \{\ballgr[P']{\frac{1}{2}}{c} \mid c \in C''_b\}$ \label{line:GD_compute_Qb}
\State $P'' \gets P' \setminus \cup_{Q \in \Q''_b}Q$ \label{line:GD_compute_P''}
\State $C_i \gets \greedyinterioralgo(P, P'', \frac{1}{2})$ \label{line:GD_interior}
\State $\Q \gets \Q'_b \cup \textsc{ComputeGeodesicVoronoi}(P', C''_b \cup C_i)$ \label{line:GD_total_pieces}
\State \Return $\Q$
\end{algorithmic}
\end{algorithm}

\begin{lemma}[Lemma 5 of \cite{abrahamsen_partitioning_2022}]\label{lma:GD_greedy_runtime}
    Given a polygon $P$, \GDgreedyboundaryalgo returns $\Q'_b$ and $\I$ in $\order{n^2 \log{n} + |\Q_b|}$ time.
\end{lemma}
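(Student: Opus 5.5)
This statement is cited from Abrahamsen and Rasmussen, so the plan is to reconstruct their running-time argument for \GDgreedyboundaryalgo. The structure of the algorithm suggests a linear sweep along $\partial P$, with the cost concentrated in deciding how far each greedy interval can extend.

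\textbf{Step 1: precompute geodesic structure.} I would build the shortest path tree from every vertex of $P$, or equivalently use a two-point shortest-path query structure after triangulation. This takes $\order{n^2}$ total time ($\order{n}$ per vertex after triangulation). It gives geodesic distances between all pairs of vertices, and geodesic distances from any point on an edge to a vertex as explicit functions of the edge parameter.

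\textbf{Step 2: charge the sweep.} The greedy walk maintains a start point $s$ and advances the endpoint $t$ counterclockwise. Each step first advances $t$ vertex by vertex while the interval $\partial P[s,v_j]$ still has geodesic diameter at most $1$. When the next vertex would violate this, it computes the maximal point $t$ on $\overline{v_{j-1}v_j}$. Each vertex is passed over a constant number of times overall, apart from the $|\Q_b|$ interval restarts. Each restart also costs $\order{1}$ extra bookkeeping, which accounts for the additive $|\Q_b|$ term.

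\textbf{Step 3: the diameter test (main obstacle).} The key difficulty is testing, and then maximizing, the geodesic diameter of $\partial P[s,t]$ as $t$ moves. The geodesic diameter of a boundary interval is attained between points that are vertices or the endpoints $s,t$. So when $t$ is appended it suffices to check $\max_{u} \dg{u}{t}$ over the vertices $u$ in the current interval and over $u=s$.

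A naive check costs $\order{n}$ per vertex, which gives $\order{n^2}$ per pass. To reach $n^2\log n$ overall, I would keep the running maximum incrementally. When a new vertex $v_j$ is added, only distances from $v_j$ to earlier interval vertices are new, and these can be read off the precomputed tree in $\order{n}$ time. Over all $n$ vertices this totals $\order{n^2}$.

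To locate the maximal $t$ on an edge, I would note that each $\dg{u}{t}$ is a convex, piecewise-hyperbolic function along the edge, with breakpoints determined by the shortest path maps. The maximum of these functions is therefore convex as well, so the threshold $t$ can be found by binary search over the $\order{n}$ breakpoints. Each evaluation costs $\order{n}$, giving $\order{n\log n}$ per interval end. Summed over at most $n$ edges this is $\order{n^2\log n}$.

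A point $s$ in the interior of an edge is handled the same way, using distances from $s$ computed through its shortest path tree in $\order{n}$. The step I expect to need the most care is justifying convexity and unimodality on an edge, which is what permits the binary search and hence the $\log n$ factor.
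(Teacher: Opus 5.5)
The paper gives no proof of this lemma; it imports it from \cite{abrahamsen_partitioning_2022}, so your reconstruction has to stand on its own. Its skeleton is sound. Convexity of $\dg{u}{\cdot}$ along any segment in a simple polygon is the standard fact you flag, and it also justifies reducing the diameter test to pairs from $\{s,t\}$ and interval vertices. With that, vertex checks can be read off precomputed trees, and the end point on the last edge is a threshold of a convex function.

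\textbf{Gap 1: the additive $|\Q'_b|$ term.} In Step~2 you charge each restart $\order{1}$. But as written in Step~3, every start point $s$ in an edge interior gets its own shortest path tree at cost $\order{n}$, and every interval end gets an $\order{n\log n}$ search. The number $|\Q'_b|$ is not bounded by any function of $n$: a long edge is cut into many unit intervals. So the cost is $\order{n\,|\Q'_b|}$, not $\order{n^2\log n+|\Q'_b|}$.

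The missing observation is this. If $s\in\overline{v_{i-1}v_i}$ and $\de{s}{v_i}>1$, the interval stays inside that edge. A sub-segment of an edge has geodesic diameter equal to its length, so $t$ is the point at distance $1$ from $s$ along the edge, found in $\order{1}$, and $\spath{t}{s}$ is the segment itself. Only intervals with $\de{s}{v_i}\le 1$ contain a vertex of $P$. These have disjoint interiors, so there are at most $n$ of them, and each edge is the last edge of at most one of them. That is exactly what your phrase ``summed over at most $n$ edges'' tacitly assumes.

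\textbf{Gap 2: the cost of the edge search.} Each of the up to $n+1$ functions $\dg{u}{\cdot}$ can by itself have $\Theta(n)$ breakpoints on $\overline{v_{j-1}v_j}$. So you have no $\order{n}$ bound on the candidate breakpoints of the envelope. Moreover, evaluating the envelope at a query point means locating that point in every source's shortest path map. That costs $\Theta(\log n)$ per source unless you add machinery such as fractional cascading. As written you get $\order{n\log^2 n}$ per end, which overshoots the claimed bound by a logarithm.

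It is simpler to drop the envelope. For each source $u$, $\dg{u}{\cdot}$ is convex on the edge and at most $1$ at $v_{j-1}$. Scan $u$'s pieces on the edge and solve $\dg{u}{a}+\de{a}{t}=1$ on the relevant piece, where $a$ is its anchor; this gives the last admissible point $t_u$. Then set $t=\min_u t_u$.

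The total cost then fits the bound:
\begin{itemize}
\item A single source's shortest path map meets $\partial P$ in $\order{n}$ pieces in total.
\item Each vertex is a source for $\order{1}$ nontrivial intervals, and each edge is searched at most once, so all searches together cost $\order{n^2}$.
\item Computing the maps from all vertices and from the at most $n$ nontrivial start points costs $\order{n^2}$.
\item Extracting the paths $\spath{t_I}{s_I}$ that bound the pieces of $\Q'_b$ costs $\order{n^2}$. The lemma asks for these pieces, and you never address them.
\end{itemize}
This gives the claimed bound with room to spare.
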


Abrahamsen and Rasmussen \cite{abrahamsen_partitioning_2022} also construct $P'$ and the above runtime includes this construction of $P'$.
Lines~\ref{line:GD_compute_Qb} and \ref{line:GD_compute_P''} of \GDgreedypolygonalgo take $\order{|C''_b|^2 \cdot n^2}$ time, see the discussion between Lemmas~\ref{lma:constructing_extra_centers} and \ref{lma:constructing_interior_pieces} in Section~\ref{sec:Gr_runtime}.
Lemma~\ref{lma:constructing_interior_pieces} proves that \greedyinterioralgo takes $\ordertilde{|C_i| \cdot (|C''_b|+|C_i|)^2 \cdot n^2}$ time while Remark~\ref{rmk:partition_from_centers} shows that Line~\ref{line:GD_total_pieces} takes $\ordertilde{|P'| + |C''_b \cup C_i|}$ time.  
We ascertain the runtime of \GDextracentersalgo below and then prove Theorem~\ref{thm:GD_approx}, restated here for convenience.

\begin{observation}\label{obs:silly_boundary}
    Given a region $P'$, \GDextracentersalgo returns $C''_b$ in $O(|P'| + |C''_b|)$ time.
\end{observation}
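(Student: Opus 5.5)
The plan is a direct amortized analysis of the single while-loop in \GDextracentersalgo. Its runtime has two parts: the work in each iteration that is independent of the boundary walked, and the work spent advancing along $\partial P'$. I will show that the first part is $\order{1}$ per placed center, for $\order{|C''_b|}$ in total. I will then show that the boundary-walking work is $\order{|P'|}$ in total, because each vertex of $P'$ is scanned only a constant number of times over the whole run.

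\textbf{Per-iteration overhead.} Each iteration appends exactly one point $s$ to $C''_b$. Together with updating $s$, $i$ and the flag, this costs constant time apart from finding $j$. So the number of iterations is $|C''_b|$, and the non-search overhead is $\order{|C''_b|}$.

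\textbf{Finding $j$.} I will implement the search for $j$ as a forward scan starting from index $i$, the vertex just after $s$. The scan keeps a running prefix length $|\partial P'[s,v_\ell]|$, adding one edge length per step, and stops at the first $\ell$ for which this length exceeds $\frac{1}{2}$. Computing $t$ on $\overline{v_{j-1}v_j}$ is then a constant-time interpolation. The cost of one iteration is $\order{1 + (j - i)}$.

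\textbf{Telescoping.} The key point is that the next iteration restarts its scan at $j$, not at the beginning. Vertices $v_i,\dots,v_{j-1}$ are therefore passed exactly once, and $v_j$ is revisited at most once per iteration, which is absorbed into the $\order{1}$ term. Summing over iterations, the scanning telescopes to at most one full traversal of $\partial P'$, which is $\order{|P'|}$. One edge case needs care: a single edge of length much greater than $\frac{1}{2}$ produces many consecutive iterations with $j = i$. Each of these costs $\order{1}$ and places a center, so they are charged to $|C''_b|$.

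\textbf{Termination and the main obstacle.} The flag is set after at most one wrap-around, detected by whether the scan passes index $0$ again. The subtle point is $P'$ being a collection of degenerate weakly simple polygons. I would handle this by running the loop separately on each boundary cycle and summing the bounds. Overlapping or repeated edges only count toward $|P'|$ as stored, so the total remains $\order{|P'| + |C''_b|}$.
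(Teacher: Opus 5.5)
Your proposal is correct and matches the argument the paper leaves implicit (the observation is stated without proof). That argument is a single forward pass along $\partial P'$ whose vertex pointer never moves backward, contributing $O(|P'|)$, plus $O(1)$ work per placed center, contributing $O(|C''_b|)$; your handling of long edges, wrap-around, and multiple boundary cycles just makes explicit details the paper glosses over.
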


\gdapprox*
\begin{proof}
    Consider $\Q$, the set that \GDgreedypolygonalgo returns. 
    From Lemma~\ref{lma:is_GD_partition}, $\Q$ is a \wordGDPartition of $P$ and from Lemmas~\ref{lma:GD_boundary_cover} and \ref{lma:interior_approx_part} and Corollary~\ref{cor:boundary_pieces_part}, we have $|\Q| \leq 15 \cdot \opt$.
    Since $|\Q'| \in \order{\opt}$, $|P'| \in \order{\opt \cdot n}$. 
    From Lemmas~\ref{lma:GD_greedy_runtime} and \ref{lma:constructing_interior_pieces} and Observation \ref{obs:silly_boundary} (and the discussion prior to Observation~\ref{obs:silly_boundary}) it is clear that its running time of \GDgreedypolygonalgo is dominated by the subroutine \greedyinterioralgo. The runtime guarantee follows since $|C_i| + |C_b| = |\Q| \leq 15 \cdot \opt$.
\end{proof}

\section{Open Questions}\label{sec:conclusion}
We see two major directions for further research: improving the algorithms' running times and their approximation factors. 
The running times of both \GRgreedypolygonalgo and \GDgreedypolygonalgo are dominated by the subroutine \greedyinterioralgo. 
It would be interesting to design a data structure that dynamically maintains the uncovered region (Line~\ref{line:maintain_overlay} in \greedyinterioralgo), perhaps drawing on recent work on dynamic orthogonal range searching data structures~\cite{chan_dynamic_2017,lau_algorithms_2021}. 
This could potentially reduce the overhead incurred by repeatedly updating and querying the current uncovered region and get a runtime similar to the algorithm for \GDPart in \cite{abrahamsen_partitioning_2022}. 

%In \cite[Theorem~10]{de_berg_clique-based_2023}, de Berg, Kisfaludi-Bak, Monemizadeh, and Theocharous show that a set of geodesic balls inside a polygon $P$ can be perturbed so that the resulting set forms a collection of \textit{pseudo-disks}. If \greedyinterioralgo can be modified so that the geodesic balls satisfy this pseudo-disk property, then the complexity of the uncovered region drops to $\order{|C_i|}$~\cite{kedem_union_1986}, thereby significantly improving the running time. \reilly{This is false. The union complexity just means the boundary doesn't use many sections of the same ball. This gives no guarantee on the bit complexity of these sections. This should depend on n regardless, think of the caterpillar example, this can be made to work for the minkowski sum or the boundary centers.}

Turning to approximation guarantees, we note work by Biniaz, Liu, Maheshwari, and Smid~\cite[Section~2]{biniaz_approximation_2017}, where a $4$-approximation is given for covering a \emph{discrete} set of points in $\reals^2$ using unit Euclidean disks. 
Their argument relies on the fact that a Euclidean half-ball of radius $2$ can be covered by four unit balls. 
In our setting, however, an analogous statement does not hold for geodesic unit balls (see Figure~\ref{fig:half_disks}), which presents a difficulty to improving our bounds for \GRCover and \GRPart via similar techniques.
\begin{figure}
    \centering
    \includegraphics[width=0.35\textwidth,page=7]{figures_prahlad.pdf}
    \caption{Four Euclidean unit balls (in red; centers marked with squares) cover a Euclidean half-ball of radius $2$ (in blue). However, the half-ball cannot be covered by four geodesic unit balls; the point, marked with a disk, is not covered by the geodesic unit balls centered at these points, even though it is distance $>\frac{1}{2}$ from the boundary. The boundary of the polygon is given in purple and the gray region is outside the polygon.}
    \label{fig:half_disks}
\end{figure}

To improve the approximation guarantee for \GDPart, we attempted to extend the approach used in Lemma~\ref{lma:interior_approx}, but this did not yield a substantially better bound without introducing significant additional complexity. By Jung's theorem~\cite{jung_ueber_1901,rosenzweig_hellys_2013}, any region of diameter $1$ is contained in a ball of radius $\frac{1}{\sqrt{3}}$. A natural idea is to partition such a ball into sectors of diameter at most $\frac{1}{2}$. 
However, this requires at least eight sectors: the regular $7$-gon inscribed in a circle of radius $\frac{1}{\sqrt{3}}$ has side length $\frac{2}{\sqrt{3}}\sin(\frac{\pi}{7}) \approx 0.501 > \frac{1}{2}$. 
This already yields a lower bound of $7 \cdot \opt$ pieces (see the proof of Theorem~\ref{thm:GD_approx} in Section~\ref{sec:diameter}).
Moreover, this only accounts for the angular width of the sectors; their radial extent still exceeds $\frac{1}{2}$, necessitating further subdivision and incurring at least one additional piece. 
Altogether, these arguments suggest that achieving a bound significantly below $9 \cdot \opt$ via such arguments is unlikely. 
In recent work we (along with several other colleagues) design a PTAS for covering and partitioning polygonal domains with pieces that fit inside a unit square \cite[Corollary 9]{aamand_covering_2026}.
It seems highly non-trivial to adapt this PTAS for the problems discussed here.
Finally, we note that the existence of an approximation algorithm for \GDCover\ remains open. Whether our methods (specifically, a suitable analogue of Theorem~\ref{thm:coverispartition} or Lemma~\ref{lma:GD_boundary_cover}) can be extended to this problem warrants further investigation.

\bibliography{bib2doi}

\end{document}